\newtheorem{theorem}{Theorem}
\newtheorem{proposition}{Proposition}
\newtheorem{definition}{Definition}
\newtheorem{remark}{Remark}
\newenvironment{proof}{\noindent {\bf Proof.}}{\hfill$\Box$}
\DeclareMathOperator{\byd}{\raisebox{-.2ex}{$\overset{\text{\tiny def}}{=}$}} 
\begin{document}

\title{Lie remarkable partial differential equations \\
characterized by Lie algebras \\
of point symmetries}

\author{Matteo Gorgone, Francesco Oliveri\\
\ \\
{\footnotesize Department MIFT, University of Messina}\\
{\footnotesize Viale F. Stagno d'Alcontres 31, 98166 Messina, Italy}\\
{\footnotesize mgorgone@unime.it; foliveri@unime.it}
}

\date{Published in \textit{J. Geom. Phys.} \textbf{144}, 314--323 (2019).}

\maketitle

\begin{abstract}
Within the framework of inverse Lie problem, we give some
non-trivial examples of coupled Lie remarkable equations, \textit{i.e.}, classes of
differential equations that are in correspondence with
their Lie point symmetries. In particular, we determine hierarchies of second order 
partial differential equations uniquely characterized by affine transformations 
of $\mathbb{R}^{n+m}$, and a system of two third order partial differential equations in two
independent variables uniquely determined by the Lie algebra of projective transformations of 
$\mathbb{R}^4$.
\end{abstract}

\noindent
\textbf{Keywords.} {Lie symmetries; Jet spaces; Partial differential equations}

\begin{center}
\emph{Dedicated to Joseph Krasil'shchik on the occasion of his 70th birthday.}
\end{center}

\section{Introduction}
Lie point symmetries of ordinary as well as partial differential equations are
(finite or infinitesimal) transformations of the independent and dependent
variables with the property of sending solutions into solutions
\cite{Ovsiannikov,Ibragimov,Olver1,Olver2,Many1999,Baumann,BlumanAnco,%
BlumanCheviakovAnco}; they in turn induce transformations of the derivatives of the latter with 
respect to the former by requiring 
that the contact conditions are preserved.
The set of all infinitesimal point symmetries, characterized by their infinitesimal 
generators, of a given differential equation span a
vector space that has also the structure of a Lie algebra.

In Lie group analysis of differential equations, the \emph{direct} problem of finding the 
admitted symmetries of a given differential equation can
be complemented by a natural \emph{inverse} problem, namely, the problem of finding
the most general form of a differential equation (once the number of the independent and 
dependent variables and the maximal order of derivatives are fixed) admitting a given Lie algebra as subalgebra 
of its infinitesimal point symmetries.

There are different ways to consider this inverse problem.
A first contribution  aimed at characterizing all differential equations
admitting a given symmetry group was given in \cite{BlumanCole}.
In \cite{Rosenhausone}, the problem whether there exist non--trivial
differential equations in one--to--one correspondence with their Lie point symmetries has
been considered: the equation for the surfaces in $\mathbb{R}^3$ with vanishing Gaussian 
curvature has been recognized to be uniquely determined by its Lie point symmetries.

It is simple to check that many relevant differential equations can not be 
characterized by their Lie point symmetries since the 
dimension of the admitted Lie algebra is too small: KdV equation, Burgers' equation, 
Kepler's equations, \ldots are just some examples.
An interesting contribution has been given by Rosenhaus in 1982 \cite{Rosenhausone}, 
who posed the problem of the unique determination of a differential equation by its group; in 
\cite{Rosenhausone}, the author considered the projective algebra of $\mathbb{R}^3$ and its subalgebras, and was 
able to prove that the equation of vanishing Gaussian curvature of surfaces in $\mathbb{R}^3$ (which is a 
Monge--Amp\`ere--type equation) is uniquely determined by its Lie point symmetries. Moreover, in 
\cite{Rosenhaus1,Rosenhaus2}, it was considered the problem of finding the minimal subalgebra of the algebra 
of point symmetries of the equation of vanishing Gaussian curvature of surfaces in $\mathbb{R}^3$ which uniquely 
determines it.
Several authors studied the problem of finding, for a given equation, an extension of the algebra of point 
symmetries for which the equation at hand is determined. For instance, in \cite{Krause1994} such a construction 
is performed in the case of ordinary differential equations by considering a non--local extension of the algebra 
of point symmetries,
and the notion of \emph{complete symmetry group} has been introduced (see also 
\cite{Nucci1996,ALF2001,MyeniLeach2007}). Also, in 
\cite{Rosenhausone}, it is shown that the algebra of contact symmetries is needed to characterize the equation 
of minimal surfaces.

The general approach to the inverse problem of constructing differential equations by 
using abstract Lie algebras requires to classify all possible realizations of the 
considered Lie algebra as algebra of vector fields on the space of independent and dependent
variables. Then, looking for the differential invariants of the realization
under consideration \cite{FushYeho}, under
suitable hypotheses of regularity, the most general differential equation admitting a 
given Lie algebra as subalgebra of point symmetries is locally given as zeros of a set 
of smooth functions of the differential invariants.

The notion of Lie remarkable equation has been first introduced by one of the authors (F.O.)
in \cite{Oliveri-NM2005}, and used to analyze several classes of Monge--Amp\`ere equations \cite{Oliveri-RendPalermo2006,MOV-Wascom05}.
Then, in \cite{MOV-JMAA2007}, the class of \emph{Lie remarkable equations}, as determined by their Lie point 
symmetries, has been characterized within a geometrical framework, distinguishing them in weakly and strongly 
Lie remarkable equations; moreover, 
either necessary or sufficient conditions for a differential equation to be strongly or weakly 
Lie remarkable have been stated. 
Roughly speaking, strongly Lie remarkable equations are uniquely determined by their Lie
point symmetries in the whole jet space, whereas weakly Lie remarkable equations are 
equations which do not intersect other equations admitting the same symmetries.
Examples of Lie remarkable partial differential equations (Monge--Amp\`ere equations, 
minimal surface equations) have been also given. 
Moreover, in  \cite{MOV-TMP2007,MOV-SPT2007}, differential equations
uniquely determined by some relevant Lie algebras of vector fields in $\mathbb{R}^3$ have been characterized. 
Finally, in \cite{MOSV-JGP2014}, the case of ordinary differential equations uniquely 
characterized by their Lie point symmetries has been investigated.

In this paper, we aim to determine Lie remarkable systems of partial differential 
equations, with special reference to the algebra 
of affine and projective transformations of $\mathbb{R}^{n+m}$, where $n\ge 2$ is the 
number of independent variables, and $m\ge 1$ the number of dependent ones. 

The plan of the paper is as follows. 
In Section~\ref{sec2}, we introduce a differential equation of order $r$ as a submanifold
of a suitable jet space \cite{Olver1,Many1999} of order $r$.  The symmetries are interpreted as particular vector fields on the jet 
space that are tangent to the differential equation.  Then, we review the definitions of strongly 
and weakly Lie remarkable equations, and recall the main results 
obtained in \cite{MOV-JMAA2007}. The (strong or weak) Lie remarkability of differential 
equations is proved by computing the rank of the distributions determined by the prolongations of a Lie algebra and determining the submanifolds where the rank lowers. In Section~\ref{sec3}, we identify a hierarchy of 
second order partial differential equations uniquely determined by the Lie algebra of affine 
transformations, whereas in Section~\ref{sec4}, we present third order strongly Lie 
remarkable equations characterized by the Lie algebra of projective transformations. 
All the required computations have been carried out with the help of the Reduce \cite{reduce} program ReLie \cite{Oliveri-Relie}.
Finally, Section~\ref{sec5} contains some concluding remarks.

\section{Theoretical setting}
\label{sec2}
In this Section, we recall some basic facts regarding Lie remarkable
equations (see \cite{MOV-JMAA2007} for further details). The theory is
carried out in the geometric framework of jet bundles \cite{Olver1,Many1999}, assuming all manifolds and maps to be $C^\infty$.

Let $E$ be an $(n+m)$--dimensional smooth manifold.  By using local charts
of the form $(x_i,u_\alpha)$, $i=1,\ldots, n$, $\alpha=1\ldots m$, we 
describe (locally) an $n$--dimensional submanifold $L\subset E$ as the
graph of a vector function $u_\alpha=f_\alpha(x_i)$. In what follows,
Latin indices run from $1$ to $n$, and Greek indices run from $1$ to $m$, unless
otherwise specified. We denote by $\chi(E)$ the Lie algebra of vector fields on $E$.

The \emph{$r$--jet of $n$--dimensional submanifolds of $E$} (also known
as extended jet bundles \cite{Olver1}, or manifold of contact elements),
$J^r(E,n)$, is the set of equivalence classes of submanifolds having at 
$p\in E$ a contact of order $r$. It has a smooth manifold structure with local charts
$(x_i,u_\alpha,u_{\alpha,j_1\ldots j_k})$, where 
\[
u_{\alpha,j_1\ldots j_k}\equiv\frac{\partial^{k}u_\alpha}{\partial x_{j_1}\ldots\partial 
x_{j_k}}=\frac{\partial^{k}f_\alpha}{\partial x_{j_1}\ldots\partial 
x_{j_k}}, \qquad 1\le k \le r. 
\]
It is  
\[
\dim(J^r(E,n)) = n+m\binom{n+r}{r}.
\]
On $J^r(E,n)$ we have the contact distribution generated by the total
derivative (Lie derivative)
\[
\frac{D}{Dx_{i}}\byd \frac{\partial}{\partial x_i}+ 
\sum_{\alpha=1}^m \left(u_{\alpha, i}\frac{\partial}{\partial u_{\alpha}}
+\sum_{k=1}^{r-1}\sum_{j_1=1}^n\ldots\sum_{j_k=1}^n u_{\alpha,j_1\ldots j_k  i}\frac{\partial}
{\partial u_{\alpha,j_1\ldots j_k}}\right).
\]
Any vector field $\Xi\in\chi(E)$, 
\[
\Xi=\sum_{i=1}^n\xi_i(x_j,u_\beta)\frac{\partial}{\partial x_i}+
\sum_{\alpha=1}^m\eta_\alpha(x_j,u_\beta)\frac{\partial}{\partial u_\alpha},
\]
can be lifted to a vector field $\Xi^{(r)}\in\chi(J^r(E,n))$, 
\[
\Xi^{(r)}=\sum_{i=1}^n\xi_i\frac{\partial}{\partial x_i}+\sum_{\alpha=1}^m\left(\eta_\alpha
\frac{\partial}{\partial u_\alpha}+\sum_{k=1}^{r}\sum_{j_1=1}^n\ldots\sum_{j_k=1}^n \eta_{[\alpha,j_1\ldots j_k]}\frac{\partial}
{\partial u_{\alpha,j_1\ldots j_k}}\right),
\]
where
\[
\eta_{[\alpha,j_1\ldots j_k]}=\frac{D\eta_{[\alpha,j_1\ldots j_{k-1}]}}{Dx_{j_k}}
-\sum_{\ell=1}^n\frac{D\xi_\ell}{Dx_{j_k}}u_{\alpha,j_1\ldots j_{k-1}\ell}, \qquad
\eta_{[\alpha]}=\eta_\alpha.
\]

A \emph{differential equation $\mathcal{E}$ of order $r$ on an $n$--dimensional
submanifold of a manifold $E$} is a submanifold of $J^r(E,n)$. An
\emph{infinitesimal point symmetry} of $\mathcal{E}$ is a vector field of the type
$\Xi^{(r)}$ which is tangent to $\mathcal{E}$. If $\mathcal{E}$ is locally described by
\[
\Delta_i=0,\qquad i=1,\ldots, q<\dim \left(J^r(E,n)\right), 
\]
where 
\[
\Delta_i: J^r(E,n)\rightarrow\mathbb{R},
\]
then point symmetries are the solutions of the system 
\[
\left.\Xi^{(r)}\left(\Delta_i\right)\right|_{\Delta_1=\cdots=\Delta_q = 0}=0,  \quad i=1,
\ldots,q.
\]
We denote by $\hbox{sym}(\mathcal{E})$ the Lie algebra of infinitesimal point symmetries 
of the equation $\mathcal{E}$.

The problem of determining the Lie algebra $\hbox{sym}(\mathcal{E})$ is said to be the direct Lie problem. Conversely, given a Lie subalgebra $\mathfrak{S}\subset \chi(J^r(E,n))$ of contact vector fields, we consider the inverse Lie problem, \emph{i.e.}, the problem of classifying the equations $\mathcal{E}\subset J^r (E, n)$ such that 
$\hbox{sym}(\mathcal{E})\supseteq \mathfrak{S}$.

For the reader's convenience, let us now recall the definitions and the main properties, 
contained in \cite{MOV-JMAA2007}, of differential equations which are characterized by their 
Lie point symmetries, that we call \emph{Lie remarkable}.

\begin{definition}[Weakly Lie remarkable equations]
\label{def:weak}
Let $E$ be a manifold, $\dim(E)=n+m$, and let $r\in\mathbb{N}$, $r>0$.  An
$\ell$--dimensional equation $\mathcal{E}\subset J^r(E,n)$ is said to be
\emph{weakly Lie remarkable} with respect to a Lie algebra $\mathfrak{S}$ of point 
symmetries if $\mathcal{E}$ is the only maximal (with respect
to the inclusion) $\ell$--dimensional equation in $J^r(E,n)$ passing at any
$\theta\in\mathcal{E}$ and admitting $\mathfrak{S}$  as  a subalgebra of its 
infinitesimal point symmetries.
\end{definition}

\begin{definition}[Strongly Lie remarkable equations]
\label{def:strong}
Let $E$ be a manifold, $\dim \left(E\right)=n+m$, and let $r\in\mathbb{N}$, $r>0$.  An
$\ell$--dimensional equation $\mathcal{E}\subset J^r(E,n)$ is said to be
\emph{strongly Lie remarkable} with respect to a Lie algebra $\mathfrak{S}$ of point 
symmetries if $\mathcal{E}$ is the only maximal (with
respect to the inclusion) $\ell$--dimensional equation in $J^r(E,n)$
admitting $\mathfrak{S}$  as a subalgebra of its infinitesimal point symmetries.
\end{definition}

Of course, a strongly Lie remarkable equation is also weakly Lie remarkable.

Some direct consequences of the above definitions are in order.  For each 
$\theta\in J^r(E,n)$, let us denote by $S_\theta(\mathcal{E})\subset T_\theta J^r(E,n)$ 
the subspace generated by the values of infinitesimal point symmetries of $\mathcal{E}$ 
at $\theta$.
Let us set 
\[
S(\mathcal{E})\byd \bigcup_{\theta\in J^r(E,n)} S_\theta(\mathcal{E}). 
\]  
In general, $\dim\left(S_\theta(\mathcal{E})\right)$ may change with 
$\theta\in J^r(E,n)$.  It is
clear that $\dim\left(\hbox{sym}(\mathcal{E})\right) \geq \dim\left(S_\theta(\mathcal{E})\right)$, 
for all  $\theta\in J^r(E,n)$. If
the rank of $S(\mathcal{E})$ at each $\theta\in J^r(E,n)$ equals 
$\dim\left(\hbox{sym}(\mathcal{E})\right)$, then
$S(\mathcal{E})$ is an involutive (smooth) distribution.  A submanifold $N$ of $J^r (E, n)$ 
is an integral submanifold of $S(\mathcal{E})$ if $T_\theta N = S_\theta (\mathcal{E})$ for 
each $\theta\in N$. Of course, an integral submanifold of $S(\mathcal{E})$ is an equation in 
$J^r(E,n)$ which admits all elements in $\hbox{sym}(\mathcal{E})$ as infinitesimal point 
symmetries. Moreover, due to the fact that point symmetries of $\mathcal{E}$ are tangent to 
$\mathcal{E}$, we have $\dim(S_\theta(\mathcal{E}))\le \ell$.

The points of $J^r(E,n)$ of maximal rank of $S(\mathcal{E})$ form an open set of $J^r(E,n)$
(see \cite{MOV-JMAA2007}). It follows that $\mathcal{E}$ can not coincide with 
the set of points of maximal rank of $S(\mathcal{E})$. 

In \cite{MOV-JMAA2007}, the following 
results, concerned with either necessary or sufficient conditions for a differential 
equation to be weakly or strongly Lie remarkable, have been proved. 

\begin{theorem}
\label{th:weaknecessary}
A necessary condition for the differential equation $\mathcal{E}$ to be
strongly Lie remarkable is that $\dim \left(\hbox{sym}(\mathcal{E})\right)> \dim
\left(\mathcal{E}\right)$.
\end{theorem}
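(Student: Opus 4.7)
The plan is to establish the contrapositive: assume $d := \dim(\hbox{sym}(\mathcal{E})) \leq \ell := \dim(\mathcal{E})$ and exhibit an $\ell$-dimensional equation $N \subset J^r(E,n)$, admitting $\mathfrak{S} := \hbox{sym}(\mathcal{E})$ as a subalgebra of its point symmetries, which contains a point $\theta_0 \notin \mathcal{E}$. Any maximal $\ell$-dimensional equation extending $N$ is then forced to differ from $\mathcal{E}$, contradicting the uniqueness required by Definition \ref{def:strong}.

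First I would analyse the distribution $S(\mathcal{E})$ on $J^r(E,n)$. Being generated by the Lie algebra $\mathfrak{S}$, it is involutive, and at every $\theta$ one has $\dim S_\theta(\mathcal{E}) \leq d \leq \ell$. Let $k$ denote its maximal rank. By the observation recorded immediately before the theorem, the locus $U \subset J^r(E,n)$ where this maximum is attained is open and does not coincide with $\mathcal{E}$; since a proper lower-dimensional submanifold of $J^r(E,n)$ has empty interior, $U$ must meet the complement of $\mathcal{E}$, so I can pick $\theta_0 \in U \setminus \mathcal{E}$. On $U$ the distribution $S(\mathcal{E})$ has constant rank $k$, hence the Frobenius theorem yields $\dim(J^r(E,n)) - k$ functionally independent local $\mathfrak{S}$-invariants $I_1, \ldots, I_{\dim(J^r(E,n))-k}$ near $\theta_0$.

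Since $k \leq \ell$, I would then select any $\dim(J^r(E,n)) - \ell$ of these invariants and fix them to the values they assume at $\theta_0$. Their common level set is an $\ell$-dimensional smooth submanifold $N \ni \theta_0$, cut out by functions invariant under $\mathfrak{S}$; consequently every element of $\mathfrak{S}$ is tangent to $N$ and $N$ is an $\ell$-dimensional equation admitting $\mathfrak{S}$. Because $\theta_0 \in N \setminus \mathcal{E}$, any maximal $\ell$-dimensional equation containing $N$ must differ from $\mathcal{E}$, producing the second maximal equation admitting $\mathfrak{S}$ that contradicts Definition \ref{def:strong}.

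The delicate point I anticipate is the subcase $k < \ell$: the Frobenius leaves through $\theta_0$ then only have dimension $k$, and $N$ must be produced as a union of such leaves, specified by fixing $\ell - k$ extra transverse directions via an appropriate choice of invariants. Keeping the construction inside the constant-rank open set $U$ and checking that the resulting $\ell$-dimensional level set is a bona fide smooth submanifold genuinely distinct from $\mathcal{E}$ are the steps requiring care; in the complementary subcase $k = \ell$, the Frobenius leaf through $\theta_0$ is already an $\ell$-dimensional $\mathfrak{S}$-invariant submanifold missing $\mathcal{E}$, and the argument closes at once.
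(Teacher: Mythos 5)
The paper itself contains no proof of this statement---it is recalled, together with Theorems~\ref{th:strongnecessary}--\ref{th:strongsufficient}, from \cite{MOV-JMAA2007}---so the only possible comparison is with the argument of that reference. Your contrapositive proof (choose $\theta_0$ in the open maximal-rank locus of $S(\mathcal{E})$ but outside $\mathcal{E}$, which is possible since $\mathcal{E}$ has empty interior; apply Frobenius on that constant-rank open set; and thicken the leaf through $\theta_0$ to an $\ell$-dimensional level set of $\mathfrak{S}$-invariants, yielding a second maximal $\ell$-dimensional equation admitting $\mathfrak{S}$) is correct and is essentially the standard argument of the cited source, resting on precisely the two facts the paper records in the paragraph preceding the theorem.
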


\begin{theorem}
\label{th:strongnecessary}
A necessary condition for the differential equation $\mathcal{E}$ to be weakly
Lie remarkable is that $\dim \left(\hbox{sym}(\mathcal{E})\right)\geq \dim
\left(\mathcal{E}\right)$.
\end{theorem}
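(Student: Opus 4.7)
The plan is to argue the contrapositive: assuming $\dim(\hbox{sym}(\mathcal{E})) < \dim(\mathcal{E}) = \ell$, I will produce a second maximal $\ell$-dimensional submanifold of $J^r(E,n)$ that passes through some $\theta_0 \in \mathcal{E}$ and admits $\mathfrak{S} \byd \hbox{sym}(\mathcal{E})$ among its point symmetries, violating Definition~\ref{def:weak}. The starting remark is that, because every lift $\Xi^{(r)}\in\mathfrak{S}$ is tangent to $\mathcal{E}$, one has $S_\theta(\mathcal{E}) \subseteq T_\theta\mathcal{E}$ with $\dim S_\theta(\mathcal{E}) \leq \dim\mathfrak{S} < \ell = \dim T_\theta\mathcal{E}$ at every $\theta \in \mathcal{E}$, so $S_\theta(\mathcal{E})$ is a \emph{proper} subspace of $T_\theta\mathcal{E}$ throughout $\mathcal{E}$.

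Next, I would pass to the regular stratum: on the open subset of $J^r(E,n)$ where $S$ has constant maximal rank $s<\ell$, noted to be open in the remark preceding the theorem, the involutivity of $S$ yields, through Frobenius' theorem, local coordinates $(y^1,\ldots,y^s,z^1,\ldots,z^{N-s})$, with $N=\dim J^r(E,n)$, in which the local orbits of the group generated by $\mathfrak{S}$ are the slices $z=\mathrm{const}$. A submanifold admits $\mathfrak{S}$ among its symmetries precisely when it is a union of such orbits, i.e.\ has the form $\{(y,z):z\in Z\}$ for some submanifold $Z$ of the transverse $(N-s)$-dimensional space, and $\ell$-dimensionality forces $\dim Z=\ell-s>0$. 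In these coordinates the equation $\mathcal{E}$ itself corresponds to a particular choice $Z=Z_0$.

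The core step is then a dimension count: because $\ell-s\geq 1$, through the image of $\theta_0$ in the transverse space there is a Grassmannian-parametrised family of distinct $(\ell-s)$-dimensional germs $Z$, and any choice whose tangent space at that point differs from that of $Z_0$ yields an $\ell$-dimensional $\mathfrak{S}$-invariant submanifold whose tangent space at $\theta_0$ differs from $T_{\theta_0}\mathcal{E}$. Extending it to a maximal equation therefore produces a competitor to $\mathcal{E}$ through $\theta_0$, contradicting weak Lie remarkability. The main obstacle is the bookkeeping around regularity: one must pick $\theta_0$ in the intersection of the regular stratum of the $\mathfrak{S}$-action with $\mathcal{E}$ (so that Frobenius truly applies in an ambient neighbourhood of $\theta_0$ in $J^r(E,n)$, not only along $\mathcal{E}$), and confirm that maximal extensions of submanifolds with different tangent spaces at $\theta_0$ cannot coincide. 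These are standard but slightly delicate points in the geometric theory of Lie group actions on jet spaces; once they are settled, the rest is essentially counting codimensions.
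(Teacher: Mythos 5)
The paper itself does not prove Theorem~\ref{th:strongnecessary}: it is recalled, without proof, from \cite{MOV-JMAA2007}. So your proposal can only be measured against the standard argument from that reference, which is indeed the strategy you outline: pass to the contrapositive, observe that $\dim S_\theta(\mathcal{E})<\ell$ everywhere, straighten the distribution $S(\mathcal{E})$ near a suitable $\theta_0\in\mathcal{E}$ by Frobenius, and exhibit a second maximal $\ell$--dimensional invariant equation through $\theta_0$ with a different tangent space. The overall plan, the dimension count $\dim Z=\ell-s\geq 1$, and the final observation that two germs with distinct tangent spaces at $\theta_0$ cannot have the same maximal extension are all sound.

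The genuine gap is the step you set aside as bookkeeping. You need $\theta_0$ to lie in the intersection of $\mathcal{E}$ with the open set where $S(\mathcal{E})$ has locally constant rank \emph{in the ambient jet space}, and you give no reason why this intersection is nonempty. It can be empty: the locus of locally constant rank is open and dense, but $\mathcal{E}$ is a positive--codimension submanifold and may sit entirely inside the closed, nowhere dense singular locus. This is not a contrived worry in the present context --- the whole mechanism of Theorem~\ref{th:strongsufficient}, and of every example in Sections~\ref{sec3} and~\ref{sec4}, is that the rank of the prolonged distribution drops exactly on $\mathcal{E}$; in your contrapositive setting nothing prevents the rank from being some $s_1<\ell$ off $\mathcal{E}$ and a strictly smaller $s_0$ on $\mathcal{E}$, in which case Frobenius applies at no point of $\mathcal{E}$ and your characterization ``invariant $=$ union of slices $z=\mathrm{const}$'' is unavailable. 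A complete proof must either (i) build the competitor without ambient Frobenius, e.g.\ via the Stefan--Sussmann orbit decomposition (each orbit has dimension $\dim S_\theta<\ell$, and one seeks an $\ell$--dimensional union of orbits through $\theta_0$ not tangent to $\mathcal{E}$ --- itself delicate, since the naive saturation of a transversal disc fails to be a manifold where the orbit dimension jumps), or (ii) add the regularity hypothesis that such a $\theta_0$ exists. As written, your argument establishes the theorem only under that extra hypothesis.
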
  

\begin{theorem} 
\label{th:weaksufficient}
If $S(\mathcal{E})|_{\mathcal{E}}$ is an $\ell$--dimensional distribution on 
$\mathcal{E}\subset J^r(E,n)$, then $\mathcal{E}$ is a weakly Lie remarkable equation.
\end{theorem}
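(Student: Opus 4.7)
The strategy is an orbit-theoretic argument. First, every $\Xi^{(r)}\in\hbox{sym}(\mathcal{E})$ is by definition tangent to $\mathcal{E}$, so $S_\theta(\mathcal{E})\subseteq T_\theta\mathcal{E}$ at every $\theta\in\mathcal{E}$. Combined with the hypothesis $\dim S_\theta(\mathcal{E})=\ell=\dim T_\theta\mathcal{E}$, this forces the equality $S_\theta(\mathcal{E})=T_\theta\mathcal{E}$ throughout $\mathcal{E}$. In other words, the infinitesimal point symmetries of $\mathcal{E}$ pointwise span its tangent bundle.

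Now let $\mathcal{E}'\subset J^r(E,n)$ be any $\ell$-dimensional equation passing through a chosen $\theta\in\mathcal{E}$ and admitting $\hbox{sym}(\mathcal{E})$ as a subalgebra of its own point symmetries. The fields in $\hbox{sym}(\mathcal{E})$ are then tangent to both $\mathcal{E}$ and $\mathcal{E}'$, and the orbit $\Sigma$ of $\theta$ under the local pseudogroup they generate is therefore contained in $\mathcal{E}\cap\mathcal{E}'$. Since $\hbox{sym}(\mathcal{E})$ is a Lie algebra of vector fields whose evaluation at $\theta$ has rank $\ell$, the orbit theorem (equivalently, Frobenius applied to the constant-rank involutive distribution $S(\mathcal{E})|_\mathcal{E}$) produces $\Sigma$ as an $\ell$-dimensional immersed submanifold with $T_\theta\Sigma = S_\theta(\mathcal{E}) = T_\theta\mathcal{E} = T_\theta\mathcal{E}'$. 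Because $\dim\mathcal{E} = \dim\mathcal{E}' = \ell$, $\Sigma$ is open in both $\mathcal{E}$ and $\mathcal{E}'$ near $\theta$, so $\mathcal{E}$ and $\mathcal{E}'$ coincide locally around $\theta$.

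The maximality clause in Definition~\ref{def:weak} then upgrades this local coincidence to equality of the maximal $\ell$-dimensional extensions through $\theta$, establishing weak Lie remarkability. The main technical point is that the orbit $\Sigma$ really opens up into $\mathcal{E}$ at \emph{every} $\theta\in\mathcal{E}$, not just at generic points: this is exactly what the uniform rank hypothesis $\dim S(\mathcal{E})|_{\mathcal{E}}=\ell$ provides. Without it, the orbit through a degenerate point of $\mathcal{E}$ could have dimension strictly less than $\ell$, leaving room for a distinct competing $\ell$-dimensional equation admitting the same symmetries.
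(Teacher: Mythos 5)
Your argument is correct: the key steps (tangency of symmetries forcing $S_\theta(\mathcal{E})=T_\theta\mathcal{E}$ on $\mathcal{E}$, then comparing with $T_\theta\mathcal{E}'$ for a competing $\ell$--dimensional invariant equation and invoking the orbit/Frobenius theorem to get local coincidence) are exactly the standard route to this result. Note that the present paper does not actually prove Theorem~\ref{th:weaksufficient} --- it only recalls it from \cite{MOV-JMAA2007} --- and the proof given there is essentially the same tangent--space comparison you carry out, so there is nothing to object to beyond the usual caveat that tangency of a vector field to a locally closed embedded submanifold is what guarantees its local flow preserves that submanifold.
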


\begin{theorem}
\label{th:strongsufficient}
Let $S(\mathcal{E})$ be such that for any $\theta\not\in\mathcal{E}$ we have 
$\dim \left(S_\theta(\mathcal{E})\right)> \ell$. Then $\mathcal{E}$ is a strongly Lie 
remarkable equation.
\end{theorem}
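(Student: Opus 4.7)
The plan is to proceed by contradiction, exploiting the tangency of point symmetries to any equation they leave invariant, together with the dimension count built into the hypothesis.

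First, suppose $\mathcal{E}'\subset J^r(E,n)$ is any $\ell$-dimensional equation admitting $\hbox{sym}(\mathcal{E})$ as a subalgebra of its infinitesimal point symmetries. For any $\theta\in\mathcal{E}'$, every vector field $\Xi^{(r)}\in\hbox{sym}(\mathcal{E})\subseteq\hbox{sym}(\mathcal{E}')$ is by definition tangent to $\mathcal{E}'$, so its value at $\theta$ lies in $T_\theta\mathcal{E}'$. Taking the span over all such vector fields yields $S_\theta(\mathcal{E})\subseteq T_\theta\mathcal{E}'$, and since $\dim T_\theta\mathcal{E}'=\ell$ we obtain the bound $\dim S_\theta(\mathcal{E})\le\ell$ for every $\theta\in\mathcal{E}'$.

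Next, I would invoke the hypothesis directly: if $\theta\notin\mathcal{E}$, then $\dim S_\theta(\mathcal{E})>\ell$, which contradicts the inequality just established. Consequently every point of $\mathcal{E}'$ must actually lie in $\mathcal{E}$, that is, $\mathcal{E}'\subseteq\mathcal{E}$. Since $\mathcal{E}$ itself is an $\ell$-dimensional equation admitting $\hbox{sym}(\mathcal{E})$ as a subalgebra of its point symmetries (trivially so), the maximality condition on $\mathcal{E}'$ forces $\mathcal{E}'=\mathcal{E}$ (two submanifolds of the same dimension with one included in the other coincide up to the usual identification of equations by their maximal connected extension). This shows $\mathcal{E}$ is the unique maximal $\ell$-dimensional equation admitting $\hbox{sym}(\mathcal{E})$ as symmetry subalgebra, i.e.\ $\mathcal{E}$ is strongly Lie remarkable.

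The hard part is not the chain of inequalities, which is essentially automatic once one observes that symmetries of $\mathcal{E}'$ must be tangent to $\mathcal{E}'$; rather, it is the bookkeeping around the term \emph{maximal} in Definition \ref{def:strong}, namely making sure that an $\ell$-dimensional submanifold strictly contained in the $\ell$-dimensional $\mathcal{E}$ cannot itself be a candidate competitor, so that the conclusion $\mathcal{E}'=\mathcal{E}$ is genuinely forced. This is where one must be slightly careful with the smooth-submanifold conventions implicit in the geometric setup of Section~\ref{sec2}, but no new ideas are needed beyond that.
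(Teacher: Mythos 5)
Your proof is correct: the tangency argument gives $\dim S_\theta(\mathcal{E})\le\ell$ at every point of a competitor equation $\mathcal{E}'$, the hypothesis then forces every such point into $\mathcal{E}$, and maximality upgrades the inclusion $\mathcal{E}'\subseteq\mathcal{E}$ to equality. The paper does not reproduce a proof of this statement (it only cites \cite{MOV-JMAA2007}), but your argument is exactly the standard one used there, so there is nothing to add.
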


By using these definitions and results, in the next sections we will determine some classes of new Lie remarkable differential equations.

\section{Second order Lie remarkable equations}
\label{sec3}
The Lie algebra $\mathcal{A}(\mathbb{R}^{n+m})$ of affine transformations of 
$\mathbb{R}^{n+m}$  is spanned by the vector fields
\[
\frac{\partial}{\partial a}, \qquad a\frac{\partial}{\partial b},
\]
where $a,b \in \{x_1,\ldots,x_n,u_1,\ldots,u_m\}$; it is
\[
\dim(\mathcal{A}(\mathbb{R}^{n+m})) =(m+n)(m+n+1);
\]
on the contrary, as far as the second order jet space is concerned, it is 
\[
\dim(J^2(\mathbb{R}^{n+m},n))=n+m(n+1)(n+2)/2,
\]
whereupon it is immediate to prove that 
\[
\dim(\mathcal{A}(\mathbb{R}^{n+m}))  >\dim(J^2(\mathbb{R}^{n+m},n))
\] 
for all $m$ provided that $n\le 5$. Nevertheless, in order to characterize 
strongly (weakly, respectively) Lie remarkable equations, it is necessary that the dimension of 
the Lie algebra  be greater (greater or equal, respectively) to the dimension of the 
equation.  In the following, we shall prove that strongly Lie remarkable equations corresponding to the Lie algebra of affine transformations of $\mathbb{R}^{n+m}$ 
exist for all values of $m$ and $n$.

In \cite{MOV-JMAA2007}, it was proved that for $n=2$, $m=1$, the Monge--Amp\`ere equation 
for a surface with vanishing Gaussian curvature,
\begin{equation}
\label{MA2}
u_{,11}u_{,22}-u_{,12}^2=0,
\end{equation}
is strongly Lie remarkable: equation~\eqref{MA2} is uniquely characterized by the Lie algebra 
of  affine transformations, even if it admits the Lie algebra of projective transformations 
of $\mathbb{R}^3$.

This result has been extended \cite{MOV-JMAA2007} to $n$--dimensional manifolds in 
$\mathbb{R}^{n+1}$, as stated by the next theorem. 

\begin{theorem}
The unique second order partial differential equation for the unknown $u(x_1,\ldots,x_n)$ 
$(n\ge 2)$,  characterized by the Lie algebra of affine transformations of 
$\mathbb{R}^{n+1}$, is
\begin{equation}
\label{detHessiano_n}
\det \left\|
\begin{array}
{llll}
u_{,11} & u_{,12} & \ldots & u_{,1n}\\
u_{,12} & u_{,22} & \ldots & u_{,2n}\\
\ldots & \ldots & \ldots & \ldots\\
u_{,1n} & u_{,2n} & \ldots & u_{,nn}\\
\end{array}
\right\|=0,
\end{equation}
that so is strongly Lie remarkable.
\end{theorem}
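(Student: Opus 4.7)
The plan is to apply Theorem~\ref{th:strongsufficient}. Since $\mathcal{E}$ defined by~\eqref{detHessiano_n} is a hypersurface in $J^2(\mathbb{R}^{n+1},n)$, we have $\ell := \dim\mathcal{E}=\dim J^2(\mathbb{R}^{n+1},n)-1$, and it suffices to prove that the distribution $S$ generated by the second prolongations of the affine generators satisfies $\dim S_\theta = \dim J^2(\mathbb{R}^{n+1},n)$ at every $\theta\notin\mathcal{E}$. First I would verify the tangency of each affine generator to $\mathcal{E}$: for the translations and for $x_i\,\partial/\partial u$ the second-derivative components of the prolongations vanish identically, so tangency is automatic; for $x_\ell\,\partial/\partial x_k$ and $u\,\partial/\partial u$ the induced action on the Hessian $H=(u_{,ij})$ is by congruence, which preserves $\det H=0$; and for $u\,\partial/\partial x_k$ tangency follows because the second-derivative component $M$ will satisfy $v^\top M v=0$ whenever $Hv=0$, and for singular symmetric $H$ this expression is precisely $d(\det H)(M)$ up to a scalar.

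To compute the rank of $S_\theta$, notice that the translations $\partial/\partial x_i$, $\partial/\partial u$, together with the prolongations of $x_i\,\partial/\partial u$ (which produce the directions $\partial/\partial u_{,i}$), already span all $2n+1$ tangent directions in the coordinates $(x_i,u,u_{,i})$. The problem therefore reduces to identifying the projection $\pi(S_\theta)$ onto the second-derivative slot, regarded as $\mathrm{Sym}_n$ with coordinate $H$.

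The main step is to evaluate $\pi$ on the remaining generators. The prolongation formula gives: $x_\ell\,\partial/\partial x_k$ contributes $-(E_{k\ell}^\top H + H E_{k\ell})$, so as $k,\ell$ vary these span $\{-(A^\top H+HA) : A\in\mathfrak{gl}(n)\}$; $u\,\partial/\partial u$ contributes $H$ itself, which equals $A^\top H+HA$ with $A=\tfrac12 I$; and $u\,\partial/\partial x_k$ contributes $-u_{,k}H - p\,e_k^\top H - H\,e_k\,p^\top$, with $p=(u_{,i})$. The crucial algebraic identity
\[
u_{,k}H + p\,e_k^\top H + H\,e_k\,p^\top = A^\top H + HA,\qquad A=\tfrac12 u_{,k}I + e_k p^\top,
\]
shows that this last contribution also lies in the congruence image. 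Hence $\pi(S_\theta)$ coincides exactly with the tangent space at $H$ to the $\mathrm{GL}(n)$-congruence orbit of $H$ in $\mathrm{Sym}_n$.

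The conclusion is then a classical dimension count: the congruence orbit of an invertible symmetric $n\times n$ matrix has dimension $n^2-\dim\mathrm{O}(H)=n(n+1)/2=\dim\mathrm{Sym}_n$, hence is open in $\mathrm{Sym}_n$. Therefore, for every $\theta\notin\mathcal{E}$ one has $\pi(S_\theta)=\mathrm{Sym}_n$ and $\dim S_\theta=2n+1+n(n+1)/2=\dim J^2(\mathbb{R}^{n+1},n)$; Theorem~\ref{th:strongsufficient} then yields strong Lie remarkability, and by Definition~\ref{def:strong} the uniqueness asserted in the theorem is automatically included. The main obstacle is precisely the algebraic identity for $u\,\partial/\partial x_k$: without recognising that this genuinely nonlinear contribution to the second-derivative directions factors through the linear congruence action, one would wrongly expect an additional transverse direction in $\pi(S_\theta)$ at singular $H$, and the dimension count would fail.
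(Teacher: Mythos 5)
Your proof is correct and follows the same route as the paper --- computing the rank of the distribution spanned by the second prolongations of the affine generators and invoking Theorem~\ref{th:strongsufficient} --- but you actually supply the algebraic content (the identification of the projection onto the Hessian slot with the tangent space to the $\mathrm{GL}(n)$--congruence orbit, which is open in $\mathrm{Sym}_n$ exactly when $\det H\neq 0$) that the paper's one-line proof only asserts. Note also that your maximal rank, $\dim J^2(\mathbb{R}^{n+1},n)=n+(n+1)(n+2)/2$, is the correct value; the paper's parenthetical ``equal to $(n+1)(n+2)$'' is the dimension of the affine algebra, which exceeds the dimension of the jet space and so cannot be the rank of a distribution on it.
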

\begin{proof}
The second order prolongation of vector fields of Lie algebra of affine transformations of 
$\mathbb{R}^{n+1}$ gives a distribution of maximal rank (equal to $(n+1)(n+2)$); the only 
maximal manifold where this rank lowers to $(n+1)(n+2)-1$ is characterized by the local 
expression \eqref{detHessiano_n}. 
\end{proof}

The Monge--Amp\`ere equation for a surface with vanishing Gaussian curvature is a particular instance of the equation introduced by Amp\`ere in a paper published in 1815 on the \emph{Journal de l'\'Ecole Polytechnique} \cite{Ampere}, having the form
\begin{equation}
\label{OriginalMA}
H \frac{\partial^2 u}{\partial t^2} + 2K \frac{\partial^2 u}{\partial t\partial x} + L \frac{\partial^2 u}{\partial x^2} + M + N\left(\frac{\partial^2 u}{\partial t^2}\frac{\partial^2 u}{\partial x^2}-\left(\frac{\partial^2 u}{\partial t\partial x}\right)^2\right) = 0,
\end{equation}
where the coefficients $H$, $K$, $L$, $M$, $N \neq 0$ depend on $t$, $x$, $u$ and first order derivatives.
A remarkable property of equation~\eqref{OriginalMA} has been proved by Boillat \cite{Boillat1968}, who showed that \eqref{OriginalMA} is the most general second order nonlinear hyperbolic equation in $(1+1)$ dimensions that is completely exceptional. In 1973, Ruggeri  \cite{Ruggeri}, using the property of complete exceptionality, generalized the Monge--Amp\`ere equation to $(2+ 1)$ dimensions. Moreover, Donato, Ramgulam, and Rogers 
\cite{DonatoRamgulamRogers} characterized the $(3+1)$--dimensional Monge--Amp\`ere equation. Boillat \cite{Boillat_n}, with the same procedure, derived the general form of 
Monge--Amp\`ere equation  for a function $u(x_1,\ldots,x_n)$. All these second order Monge--Amp\`ere
equations are given as a linear combination (with coefficients arbitrary functions of the independent variables, 
the dependent variable and its first order derivatives) of all minors of all orders extracted from the Hessian 
matrix. Some of these equations can be reduced to quasilinear or linear form
\cite{Oliveri_MAlinearizable,GorgoneOliveri1,GorgoneOliveri2}. 
The property of complete exceptionality has been used by Boillat \cite{Boillat_higher} 
to characterize higher order Monge--Amp\`ere equations for a function $u(x_1,x_2)$; in 
such a case, they are given as a linear combination of all minors of all orders 
extracted from the Hankel matrix.

Now, we consider the situation where $m>1$, \emph{i.e.}, the case of a system of partial differential equations, 
and derive new strongly Lie remarkable equations.

\begin{theorem}
The system of second order partial differential equations locally described in $J^2(\mathbb{R}^4,2)$ by
\begin{equation}
\label{strong222}
\left\{
\begin{aligned}
&u_{1,11}u_{2,12}-u_{1,12}u_{2,11}=0,\\
&u_{1,11}u_{2,22}-u_{1,22}u_{2,11}=0,
\end{aligned}
\right.
\end{equation}
is strongly Lie remarkable.
\end{theorem}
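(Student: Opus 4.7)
The plan is to apply Theorems~\ref{th:weaksufficient} and~\ref{th:strongsufficient}. The relevant dimensions are $\dim J^2(\mathbb{R}^4,2)=14$, $\dim(\mathcal{A}(\mathbb{R}^4))=20$, and, since the two equations in (\ref{strong222}) are functionally independent, $\ell=\dim(\mathcal{E})=12$. The necessary condition $\dim(\mathcal{A}(\mathbb{R}^4))>\ell$ of Theorem~\ref{th:weaknecessary} is therefore satisfied.

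The first step is to verify that every generator of $\mathcal{A}(\mathbb{R}^4)$ is an infinitesimal point symmetry of (\ref{strong222}), or equivalently that its second prolongation is tangent to $\mathcal{E}$. A structural argument makes this plausible: the left-hand sides of (\ref{strong222}) are two of the three $2\times 2$ minors of the matrix
\[
\begin{pmatrix}
u_{1,11} & u_{1,12} & u_{1,22}\\
u_{2,11} & u_{2,12} & u_{2,22}
\end{pmatrix},
\]
so $\mathcal{E}$ is a component of the locus where this matrix has rank at most one. The second prolongation of a linear vector field $u_\alpha\,\partial/\partial u_\beta$ acts on this matrix by elementary row operations, while the second prolongation of $x_i\,\partial/\partial x_j$ acts on its columns as the symmetric square of a linear transformation on the $x$'s; the mixed generators $x_i\,\partial/\partial u_\alpha$ and $u_\alpha\,\partial/\partial x_i$ together with the translations can also be checked to preserve the same rank condition. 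Hence $\mathcal{E}$ is invariant under the full algebra.

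The second and main step is the rank computation. Assembling the $20$ prolonged generators as the columns of a $14\times 20$ matrix $M(\theta)$ whose entries are polynomial in the jet coordinates, one has to establish
\[
\mathrm{rank}\,M(\theta)=\begin{cases}12, & \theta\in\mathcal{E},\\ 14, & \theta\notin\mathcal{E},\end{cases}
\]
which is the codimension-two analogue of the rank-drop identity used in the preceding theorem. The first line, together with tangency, makes $S(\mathcal{E})|_{\mathcal{E}}$ an $\ell$-dimensional distribution, and Theorem~\ref{th:weaksufficient} yields weak Lie remarkability; the second line provides $\dim S_\theta(\mathcal{E})>\ell$ for every $\theta\notin\mathcal{E}$, and Theorem~\ref{th:strongsufficient} upgrades this to strong Lie remarkability.

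The principal obstacle is computational: one must certify that the common vanishing locus of all $13\times 13$ minors of $M(\theta)$ coincides with $\mathcal{E}$, with no spurious lower-dimensional components off $\mathcal{E}$ where the rank collapses, and exhibit a $12\times 12$ minor of $M(\theta)$ that remains nonzero when restricted to $\mathcal{E}$. Both tasks are purely symbolic linear algebra on polynomial data and are naturally delegated to the Reduce package ReLie referenced in the introduction.
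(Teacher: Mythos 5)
Your proposal follows the paper's proof essentially verbatim: the same dimension count ($\dim J^2(\mathbb{R}^4,2)=14$, $\dim\mathcal{A}(\mathbb{R}^4)=20$, $\ell=12$), the same rank computation (generic rank $14$ of the prolonged distribution, dropping to $12$ exactly on the submanifold \eqref{strong222}), and the same appeal to the sufficient conditions of Theorems~\ref{th:weaksufficient} and~\ref{th:strongsufficient}; the paper merely asserts the computation as ``straightforward'' where you spell out what must be certified and delegate it to symbolic computation. The only quibble is your side remark that $\mathcal{E}$ is a component of the rank-$\le 1$ locus of the $2\times 3$ matrix of second derivatives: since only two of its three minors appear in \eqref{strong222}, $\mathcal{E}$ in fact strictly contains that locus (it also contains the stratum $u_{1,11}=u_{2,11}=0$), but this heuristic is not load-bearing because you defer the tangency check to direct verification.
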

\begin{proof}
System \eqref{strong222} characterizes a 12--dimensional manifold in the second order jet 
space $J^2(\mathbb{R}^4,2)$, that has dimension 14. The dimension of the Lie algebra of 
affine transformations of $\mathbb{R}^4$ is 20, and, by a straightforward calculation, 
one can easily recognize that the rank of the distribution of its second order 
prolongation is 14, reducing  to 12 only on the 12--dimensional manifold of 
$J^2(\mathbb{R}^4,2)$ characterized by \eqref{strong222};
therefore, system \eqref{strong222} is strongly Lie remarkable. In passing, we note that 
system~\eqref{strong222} admits also the symmetries generated by
\begin{equation}
a\left(x_1\frac{\partial}{\partial x_1}+x_2\frac{\partial}{\partial x_2}
+u_1\frac{\partial}{\partial u_1}+
u_2\frac{\partial}{\partial u_2}\right),
\end{equation}
where $a\in\{x_1,x_2,u_1,u_2\}$, \emph{i.e.}, it admits the 24--dimensional Lie algebra 
of projective transformations of $\mathbb{R}^4$.
\end{proof}

\begin{remark}
In a 4--dimensional Euclidean space, the functions
\begin{equation}
\label{M2}
u_1=u_1(x_1,x_2),\quad u_2=u_2(x_1,x_2) 
\end{equation}
characterize a 2--dimensional surface. In \cite{Ganchev-Milousheva}, the authors introduced an invariant linear map of Weingarten type in the tangent space of the surface, which generates two invariants $k$ and $\kappa$.
These invariants divide the points of the surface into 4 types: flat, elliptic, parabolic and hyperbolic.
The two invariants (see \cite{Ganchev-Milousheva} for the details) for the surface defined by \eqref{M2} have the form
\begin{equation*}
\begin{aligned}
k &=\frac{4(u_{1,11}u_{2,12}-u_{1,12}u_{2,11})(u_{1,12}u_{2,22}-u_{1,22}u_{2,12})-(u_{1,11}u_{2,22}-u_{1,22}u_{2,11})^2}{(1+u_{1,1}^2+u_{1,2}^2+u_{2,1}^2+u_{2,2}^2+(u_{1,1}u_{2,2}+u_{1,2}u_{2,1})^2)^3},\\
\kappa&=\frac{(1+u_{1,1}^2+u_{2,1}^2)(u_{1,12}u_{2,22}-u_{1,22}u_{2,12})}{(1+u_{1,1}^2+u_{1,2}^2+u_{2,1}^2+u_{2,2}^2+(u_{1,1}u_{2,2}+u_{1,2}u_{2,1})^2)^2}\\
&+\frac{(1+u_{1,2}^2+u_{2,2}^2)(u_{1,11}u_{2,12}-u_{1,12}u_{2,11})}{(1+u_{1,1}^2+u_{1,2}^2+u_{2,1}^2+u_{2,2}^2+(u_{1,1}u_{2,2}+u_{1,2}u_{2,1})^2)^2}\\
&-\frac{(u_{1,1}u_{1,2}-u_{2,1}u_{2,2})(u_{1,11}u_{2,22}-u_{1,22}u_{2,11})}{(1+u_{1,1}^2+u_{1,2}^2+u_{2,1}^2+u_{2,2}^2+(u_{1,1}u_{2,2}+u_{1,2}u_{2,1})^2)^2}.
\end{aligned}
\end{equation*}

It is immediately seen that both invariants vanish identically on the equations \eqref{strong222}, 
whereupon it remains proved that these equations characterize the surfaces consisting of flat points:
such surfaces are either planar surfaces or developable ruled surfaces \cite{Ganchev-Milousheva}.
\end{remark}

Second order strongly Lie remarkable systems do exist also for $m>2$ or $n>2$. 
Before stating the theorem that 
identifies such systems for arbitrary values of $m$ 
and $n$, let us consider some special cases.

\begin{theorem}
The system of second order partial differential equations   locally described in $J^2(\mathbb{R}^5,3)$ by
\begin{equation}
\label{strong322}
\left\{
\begin{aligned}
&u_{1,11}u_{2,12}-u_{1,12}u_{2,11}=0,\\
&u_{1,11}u_{2,13}-u_{1,13}u_{2,11}=0,\\
&u_{1,11}u_{2,22}-u_{1,22}u_{2,11}=0,\\
&u_{1,11}u_{2,23}-u_{1,23}u_{2,11}=0,\\
&u_{1,11}u_{2,33}-u_{1,33}u_{2,11}=0,
\end{aligned}
\right.
\end{equation}
is strongly Lie remarkable.
\end{theorem}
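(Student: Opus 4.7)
The plan is to invoke Theorem~\ref{th:strongsufficient}. First I would record the relevant dimensions: with $n=3$ independent variables and $m=2$ dependent ones, one has
\[
\dim J^{2}(\mathbb{R}^{5},3) = 3 + 2\binom{5}{2} = 23, \qquad \dim \mathcal{A}(\mathbb{R}^{5}) = 5\cdot 6 = 30,
\]
while the five relations in \eqref{strong322} are functionally independent on the open set $(u_{1,11},u_{2,11})\neq(0,0)$: each of them involves a distinct second-order coordinate linearly, with coefficient $\pm u_{2,11}$ or $\pm u_{1,11}$. Hence the submanifold $\mathcal{E}$ cut out by \eqref{strong322} is $\ell=18$--dimensional, and the objective becomes to show that the distribution $S(\mathcal{E})$ generated by the second prolongations of $\mathcal{A}(\mathbb{R}^{5})$ has rank $23$ off $\mathcal{E}$ and drops precisely to $18$ on $\mathcal{E}$.

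Next I would take as generators of $\mathcal{A}(\mathbb{R}^{5})$ the $30$ vector fields $\partial/\partial a$ and $a\,\partial/\partial b$ with $a,b\in\{x_1,x_2,x_3,u_1,u_2\}$, prolong each of them to second order by the recursion recalled in Section~\ref{sec2}, and collect their coefficients into a $23\times 30$ matrix $M(\theta)$ whose entries are polynomial in the jet coordinates. The translations and dilations immediately account for the rows corresponding to the base and first-order coordinates, so the real content lies in the block coupling the $12$ pure second-derivative directions $u_{\alpha,ij}$ to the prolongations of the shears $x_i\,\partial/\partial x_k$, $u_\alpha\,\partial/\partial u_\beta$, $x_i\,\partial/\partial u_\alpha$ and $u_\alpha\,\partial/\partial x_i$.

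The decisive step is the rank stratification of this block. I would verify that wherever the $2\times 6$ array
\[
\begin{pmatrix}
u_{1,11} & u_{1,12} & u_{1,13} & u_{1,22} & u_{1,23} & u_{1,33}\\
u_{2,11} & u_{2,12} & u_{2,13} & u_{2,22} & u_{2,23} & u_{2,33}
\end{pmatrix}
\]
has rank $2$, the prolonged shears span all $12$ pure second-derivative directions (so $\mathrm{rank}\,M(\theta)=23$), whereas whenever this array has rank at most $1$ --- which on $(u_{1,11},u_{2,11})\neq(0,0)$ is exactly the content of \eqref{strong322} --- five independent linear relations appear among the columns of $M(\theta)$ and the rank drops to $18=\ell$. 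Theorem~\ref{th:strongsufficient} then yields strong Lie remarkability at once. The principal obstacle is the sheer size of this rank stratification on a $23\times 30$ jet-polynomial matrix: it is impractical by hand and is exactly the task delegated to the ReLie package cited in the introduction. Geometrically, however, the conclusion is natural, since the rank-at-most-one locus of the $2\times 6$ second-derivative block is an affine-invariant condition that plays, for two dependent variables in three independent ones, the role already taken by \eqref{detHessiano_n} in the scalar case.
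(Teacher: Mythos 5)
Your proposal follows the same route as the paper: the identical dimension count ($\dim J^2(\mathbb{R}^5,3)=23$, $\dim\mathcal{A}(\mathbb{R}^5)=30$, $\ell=18$) followed by the machine-assisted verification that the rank of the prolonged affine distribution is $23$ off the submanifold and drops to $18$ precisely on it, so that Theorem~\ref{th:strongsufficient} applies; the paper simply states this computation as ``straightforward'' without your additional gloss identifying the singular locus with the rank-one condition on the $2\times 6$ array of second derivatives. The argument is correct and essentially identical to the paper's.
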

\begin{proof}
System \eqref{strong322} characterizes a 18--dimensional manifold in the second order jet 
space $J^2(\mathbb{R}^5,3)$, that has dimension 23. The dimension of the Lie algebra of 
affine transformations of $\mathbb{R}^5$ is 30, and, by a straightforward calculation, 
one can easily recognize that the rank of the distribution of its second order 
prolongation is 23, reducing  to 18 only on the 18--dimensional manifold of 
$J^2(\mathbb{R}^5,3)$ characterized by \eqref{strong322},
that, consequently, is strongly Lie remarkable. Actually,  the 
system~\eqref{strong322} admits the 
35--dimensional Lie algebra of projective transformations of $\mathbb{R}^5$.
\end{proof}

\begin{theorem}
The system of second order partial differential equations  locally described in $J^2(\mathbb{R}^5,2)$ by
\begin{equation}
\label{strong232}
\left\{
\begin{aligned}
&u_{1,11}u_{2,12}-u_{1,12}u_{2,11}=0,\\
&u_{1,11}u_{2,22}-u_{1,22}u_{2,11}=0,\\
&u_{2,11}u_{3,12}-u_{2,12}u_{3,11}=0,\\
&u_{2,11}u_{3,22}-u_{2,22}u_{3,11}=0,
\end{aligned}
\right.
\end{equation}
is strongly Lie remarkable.
\end{theorem}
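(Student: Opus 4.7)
The plan is to apply Theorem~\ref{th:strongsufficient}, following exactly the template used in the two previous proofs. First I would verify the numerical set-up: the jet space $J^2(\mathbb{R}^5,2)$ has dimension $2+3\binom{4}{2}=20$; the affine Lie algebra $\mathcal{A}(\mathbb{R}^5)$ has dimension $5\cdot 6=30$; and, by checking that the Jacobian of the left-hand sides of \eqref{strong232} with respect to the second-order jet coordinates has rank four at a generic point, the four equations are functionally independent and carve out a $16$--dimensional submanifold $\mathcal{E}\subset J^2(\mathbb{R}^5,2)$. It then suffices, by Theorem~\ref{th:strongsufficient}, to show that the distribution $S(\mathcal{E})$ generated by the second prolongations of a basis of $\mathcal{A}(\mathbb{R}^5)$ has rank exactly $16$ on $\mathcal{E}$ and rank strictly greater than $16$ (in fact equal to $20$) on its complement.

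Next I would carry out the rank computation symbolically, as done for systems \eqref{strong222} and \eqref{strong322}, with the help of ReLie. Concretely, I would assemble the matrix whose $30$ rows are the components of the prolonged generators at a generic point $\theta$, relative to the $20$ coordinate vector fields of $J^2(\mathbb{R}^5,2)$. Since its rank cannot exceed $20$, the generic maximum is reached as soon as one exhibits $20$ linearly independent prolonged fields at a single point, and the rank-dropping locus is the common vanishing set of the non-trivial $20\times 20$ minors. The key claim is that, after polynomial elimination, this locus coincides with the manifold defined by \eqref{strong232}, on which moreover the corank equals precisely four.

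The main obstacle is computational rather than conceptual: the $30\times 20$ coefficient matrix is large and its determinantal ideal is unwieldy in closed form. To make the last step tractable I would first observe that every polynomial in \eqref{strong232} is an $\mathcal{A}(\mathbb{R}^5)$--invariant of order two, essentially inherited from the invariance of the corresponding $2\times 2$ relations already established in the proof of the theorem for \eqref{strong222}, applied in turn to the pairs $(u_1,u_2)$ and $(u_2,u_3)$. This forces $\mathcal{E}$ to be an integral manifold of $S(\mathcal{E})$, so the rank on $\mathcal{E}$ is automatically at most $16$. A direct evaluation of the prolonged distribution at a single generic representative point of $\mathcal{E}$ (first-order coordinates chosen at random, second-order ones satisfying \eqref{strong232}) then confirms that the corank there is exactly four, while the analogous check at a generic point outside $\mathcal{E}$ exhibits full rank $20$; together with Theorem~\ref{th:strongsufficient}, this yields strong Lie remarkability.
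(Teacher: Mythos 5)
Your overall strategy is exactly the paper's: count dimensions ($\dim J^2(\mathbb{R}^5,2)=20$, $\dim\mathcal{A}(\mathbb{R}^5)=30$, $\dim\mathcal{E}=16$), show that the distribution spanned by the second prolongations of the affine generators has rank $20$ off the submanifold and rank $16$ on it, and invoke Theorem~\ref{th:strongsufficient}. Your tangency observation --- each defining polynomial of \eqref{strong232} is the $m=2$ relation applied to the pairs $(u_1,u_2)$ and $(u_2,u_3)$, so the prolonged fields are tangent to $\mathcal{E}$ and the rank there is at most $16$ --- is sound and is consistent with the invariance computation the paper carries out in general for the system \eqref{strongnm2}.

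The gap is in the shortcut you substitute for the analysis of the complement. Theorem~\ref{th:strongsufficient} requires $\dim\left(S_\theta(\mathcal{E})\right)>16$ for \emph{every} $\theta\notin\mathcal{E}$; evaluating the $30\times 20$ coefficient matrix at \emph{one} generic point outside $\mathcal{E}$ only shows that the generic rank is $20$, and leaves open the possibility of a locus not contained in $\mathcal{E}$ where the rank falls to $16$ or below --- precisely the situation in which another $16$--dimensional integral manifold could exist and strong Lie remarkability would fail. You state the correct goal in your second paragraph (determine the entire rank--dropping locus via the minors and show it is contained in the zero set of \eqref{strong232}), but the ``tractable'' version you then propose does not deliver that universal statement. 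The paper's proof, terse as it is, asserts exactly the global claim: the rank of the prolonged distribution is $20$ everywhere except on the $16$--dimensional manifold described by \eqref{strong232}, where it reduces to $16$. That global determination of where the rank drops is the computation you must actually perform; a sample at a single random point off $\mathcal{E}$ cannot replace it.
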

\begin{proof}
System \eqref{strong232} characterizes a 16--dimensional manifold in the second order jet 
space $J^2(\mathbb{R}^5,2)$, that has dimension 20. The dimension of the Lie algebra of 
affine transformations of $\mathbb{R}^5$ is 30, and, by a straightforward calculation, 
one can easily recognize that the rank of the distribution of its second order 
prolongation is 20, reducing  to 16 only on the 16--dimensional manifold of 
$J^2(\mathbb{R}^5,2)$ characterized by \eqref{strong232},
that, consequently, is strongly Lie remarkable. Actually, it is easily verified  that 
system~\eqref{strong232} 
admits the 35--dimensional Lie algebra of projective transformations of $\mathbb{R}^5$ too.
\end{proof}

For arbitrary values of $m$ and $n$, we are able to state the following result.

\begin{theorem}
The system of second order partial differential equations in the unknowns
$(u_1(x_1,\ldots,x_n),\ldots,u_m(x_1,\ldots,x_n))$
\begin{equation}
\label{strongnm2}
\boldsymbol\Delta=\mathbf{0}, 
\end{equation}
with the components of $\boldsymbol\Delta$ given by
\begin{equation}
\label{strongnm2components}
\Delta_{\alpha,\alpha+1;1,1,p,q}\equiv 
u_{\alpha,11} u_{\alpha+1,pq}-u_{\alpha,pq} u_{\alpha+1,11}=0,
\end{equation}
where $\alpha=1,\ldots,m-1$ and $p,q=1,\ldots, n$, is strongly Lie remarkable.
\end{theorem}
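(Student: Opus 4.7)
The plan is to apply Theorem~\ref{th:strongsufficient}, so I must show that the rank of the second prolongation of $\mathcal{A}(\mathbb{R}^{n+m})$ exceeds $\ell\equiv\dim(\mathcal{E})$ at every $\theta\notin\mathcal{E}$. The first step is a dimension count. Setting $N=n(n+1)/2$, one has $\dim J^2(\mathbb{R}^{n+m},n)=n+m+mn+mN$ and $\dim\mathcal{A}(\mathbb{R}^{n+m})=(n+m)(n+m+1)$. To identify $\mathcal{E}$, I observe that the relations \eqref{strongnm2components} force, for each $\alpha$, the Hessian rows $(u_{\alpha,pq})_{pq}$ and $(u_{\alpha+1,pq})_{pq}$ to be proportional in $\mathbb{R}^N$; by transitivity $\mathcal{E}$ is precisely the locus where the $m\times N$ matrix $M=(u_{\alpha,pq})$ has rank at most one. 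After discarding the tautological identity $\Delta_{\alpha,\alpha+1;1,1,1,1}\equiv 0$, the $(m-1)(N-1)$ remaining components of $\boldsymbol\Delta$ are functionally independent, whence $\ell=n+2m+mn+N-1$; a short estimate yields $\dim\mathcal{A}(\mathbb{R}^{n+m})-\ell>0$ for all $m,n\ge 2$, so the affine algebra is in principle large enough for the rank of $S(\mathcal{E})$ to exceed $\ell$ off $\mathcal{E}$.

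The second step is the rank computation itself. I would split the generators into the translations $\partial/\partial x_i,\partial/\partial u_\alpha$ and the inhomogeneous fields $x_i\partial/\partial u_\alpha$, whose first prolongations already sweep out the base and first-derivative directions, and the homogeneous generators $x_i\partial/\partial x_j$, $u_\gamma\partial/\partial u_\beta$, $u_\gamma\partial/\partial x_j$, which are the sole contributors to the top-derivative block $\{\partial/\partial u_{\beta,pq}\}$ with respective coefficients $-\delta_{ip}u_{\beta,jq}-\delta_{iq}u_{\beta,pj}$, $\delta_{\beta'\beta}u_{\gamma,pq}$, and $-u_{\gamma,pq}u_{\beta,j}-u_{\gamma,p}u_{\beta,jq}-u_{\gamma,q}u_{\beta,pj}$. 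Outside $\mathcal{E}$, where $\mathrm{rk}(M)\ge 2$, the generators $u_\gamma\partial/\partial u_\beta$ already supply $m\cdot\min(m,N)$ independent directions within the $mN$-dimensional top-derivative block, and the missing slots are filled by the combined action of $x_i\partial/\partial x_j$ and $u_\gamma\partial/\partial x_j$, whose top-order coefficients pull in shifted copies of the Hessian rows.

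The hard part will be certifying this rank count uniformly in $m$ and $n$: one must show that the ideal of minors of appropriate size of the $(n+m)(n+m+1)$-column prolongation matrix cuts out exactly $\mathcal{E}$, so that $\dim S_\theta>\ell$ everywhere off $\mathcal{E}$. The pattern is the one already verified for $(m,n)=(2,2)$, $(3,2)$, $(2,3)$ in the preceding theorems, and I would close the argument by the parametric rank verification in ReLie~\cite{Oliveri-Relie}; the strong Lie remarkability of \eqref{strongnm2} then follows from Theorem~\ref{th:strongsufficient}.
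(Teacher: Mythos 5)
There is a genuine gap, and it sits exactly where the paper's proof puts its effort. Before Theorem~\ref{th:strongsufficient} can be invoked, one must know that $\mathcal{A}(\mathbb{R}^{n+m})\subseteq\hbox{sym}(\mathcal{E})$, since $S(\mathcal{E})$ is by definition generated by the point symmetries of $\mathcal{E}$; equivalently, the prolonged affine fields must be shown tangent to the zero set of $\boldsymbol\Delta$. Your proposal never establishes this. The paper's proof consists almost entirely of that verification, carried out uniformly in $m$ and $n$: for each generator $x_j\partial/\partial x_i$, $u_\beta\partial/\partial x_i$, $x_i\partial/\partial u_\beta$, $u_\gamma\partial/\partial u_\beta$ it computes $\Xi^{(2)}\left(\Delta_{\alpha,\alpha+1;1,1,p,q}\right)$ explicitly and exhibits it as a linear combination of components $\Delta_{\cdot,\cdot;\cdot,\cdot,\cdot,\cdot}$ (using the derived identities $\Delta_{\alpha,\beta;i,j,p,q}=0$), hence vanishing on $\boldsymbol\Delta=\mathbf{0}$. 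Without some such argument your application of Theorem~\ref{th:strongsufficient} is not licensed. One could try to shortcut this by arguing that the rank-drop locus of a distribution generated by a Lie algebra of vector fields is automatically invariant under their flows, but you do not make that argument, and it would anyway require first proving that $\mathcal{E}$ \emph{is} the rank-drop locus.

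That identification is the second gap: you explicitly defer the uniform-in-$(m,n)$ rank certification to a ``parametric rank verification in ReLie,'' but a computer algebra check can only certify fixed pairs $(m,n)$ (as in the preceding theorems for $(2,2)$, $(3,2)$, $(2,3)$); it cannot prove the general statement, which is the whole content of this theorem. (To be fair, the paper itself only asserts the maximal-rank claim for general $m,n$ rather than proving it, but it does supply the general invariance computation, which your proposal lacks entirely.) A smaller inaccuracy: the zero set of \eqref{strongnm2components} is \emph{not} precisely the locus where the matrix $M=(u_{\alpha,pq})$ has rank at most one. The equations only kill the $2\times 2$ minors of consecutive rows that involve the column $(1,1)$; on the stratum where the entries $u_{\alpha,11}$ vanish they impose nothing, so the proportionality/transitivity argument fails there and $\mathcal{E}$ strictly contains the rank-one locus. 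Your dimension count $\ell=n+2m+mn+n(n+1)/2-1$ is nevertheless correct on the generic stratum (and in fact corrects a slip in the paper's displayed formula for $d$).
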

\begin{proof}
System~\eqref{strongnm2}, made by $(m-1)(n^2+n-2)/2$ independent differential equations, 
characterizes a manifold with dimension $d$ in the second order jet space 
$J^2(\mathbb{R}^{n+m},n)$, where
\[
d = n^2+2n+3m-2+\frac{mn(1-n)}{2}.
\]
A simple analysis shows that $d<(n+m)(n+m+1)$ for all positive values of $m$ and $n$, so 
that the number of symmetries in the affine Lie algebra of $\mathbb{R}^{n+m}$, according 
to Theorem~\ref{th:strongsufficient}, may suffice. Moreover, the rank of the distribution of the 
second order prolongation of the generators of affine Lie algebra in $\mathbb{R}^{n+m}$ is 
maximal (the minimum between the dimension of affine Lie algebra and the dimension of jet 
space).  

It is easy to recognize that from \eqref{strongnm2components} it follows
\begin{equation}
\Delta_{\alpha,\beta;i,j,p,q} \equiv 
u_{\alpha,ij} u_{\beta,pq}-u_{\alpha,pq} u_{\beta+1,ij}=0,
\end{equation}
for all $\alpha,\beta \in \{1,\ldots,m\}$ and all $i,j,p,q \in\{1,\ldots,n\}$.

Now, let us prove that system~\eqref{strongnm2} admits as subalgebra of the algebra of 
its Lie point symmetries the Lie algebra of affine transformations. In the sequel, we use  the 
Einstein summation convention over repeated indices.

Consider the infinitesimal generators of the affine Lie algebra in 
$\mathbb{R}^{n+m}$, say
\begin{equation}\label{affine_gen}
\frac{\partial}{\partial x_i}, \qquad \frac{\partial}{\partial u_\beta}, \qquad x_j 
\frac{\partial}{\partial x_i},\qquad 
u_\beta \frac{\partial}{\partial x_i},\qquad x_i \frac{\partial}{\partial u_\beta},\qquad 
u_\gamma \frac{\partial}{\partial u_\beta},
\end{equation}
where $i,j=1,\ldots,n$ and $\beta,\gamma=1,\ldots,m$.
We will prove that each of the vector fields in \eqref{affine_gen} is admitted by 
system~\eqref{strongnm2}.

Since neither the independent variables nor the dependent ones  appear explicitly in 
system~\eqref{strongnm2}, it is 
evident the invariance with respect to the $(n+m)$ translations of the independent and 
dependent variables.
Let us consider the remaining vector fields.
\begin{enumerate}

\item $\Xi=x_j \frac{\partial}{\partial x_i}$. 
Its second order prolongation is
\begin{equation}
\Xi^{(2)}=x_j \frac{\partial}{\partial x_i}
-\delta_{j,k}u_{\gamma,i}\frac{\partial}{\partial u_{\gamma,k}}-\left(\delta_{j,\ell}
u_{\gamma,ik} 
+\delta_{j,k} u_{\gamma,i\ell}\right)\frac{\partial}{\partial u_{\gamma,k\ell}},
\end{equation}
where $\delta_{i,j}$ is the Kronecker symbol.
It is:
\begin{equation}
\begin{aligned}
&\Xi ^{(2)}\left(\Delta_{\alpha,\alpha+1;1,1,p,q}\right)=\\ 
&\;=2\delta_{j,1} \left(u_{\alpha,1i}u_{\alpha+1,pq}-
u_{\alpha,pq}u_{\alpha+1,1i}\right)\\
&\;+\delta_{j,p}\left(u_{\alpha,11}u_{\alpha+1,iq}-u_{\alpha,iq}u_{\alpha+1,11}\right)\\
&\;+\delta_{j,q}\left(u_{\alpha,11}u_{\alpha+1,ip}-u_{\alpha,ip}u_{\alpha+1,11}\right)\\
&\;=2\delta_{j,1}\Delta_{\alpha,\alpha+1;1,i,p,q}+\delta_{j,p}\Delta_{\alpha,\alpha+1;1,1,i,q}
+\delta_{j,q}\Delta_{\alpha,\alpha+1;1,1,i,p},
\end{aligned}
\end{equation}
that vanishes on $\boldsymbol\Delta=\mathbf{0}$;

\item  $\Xi=u_\beta \frac{\partial}{\partial x_i}$. Its second order prolongation is
\begin{equation}
\begin{aligned}
\Xi^{(2)}&=u_\beta \frac{\partial}{\partial x_i}
-u_{\gamma,i}u_{\beta,k}\frac{\partial}{\partial u_{\gamma,k}}\\
&-\left(u_{\gamma,i}u_{\beta,k\ell}
+u_{\beta,k}u_{\gamma,i\ell}
+u_{\beta,\ell}u_{\gamma,ik}\right)\frac{\partial}{\partial u_{\gamma,k\ell}},
\end{aligned}
\end{equation}
whereupon 
\begin{equation}
\begin{aligned}
&\Xi^{(2)}\left(\Delta_{\alpha,\alpha+1;1,1,p,q}\right)=\\
&\;=u_{\alpha,i}\left(u_{\beta,11}u_{\alpha+1,pq}-u_{\beta,pq}u_{\alpha+1,11}\right)\\
&\;+u_{\alpha+1,i}\left(u_{\alpha,11}u_{\beta,pq}-u_{\alpha,pq}u_{\beta,11}\right)\\
&\;+2u_{\beta,1}\left(u_{\alpha,1i}u_{\alpha+1,pq}-u_{\alpha,pq}u_{\alpha+1,1i}\right)\\
&\;+u_{\beta,p}\left(u_{\alpha,11}u_{\alpha+1,iq}-u_{\alpha,iq}u_{\alpha+1,11}\right)\\
&\;+u_{\beta,q}\left(u_{\alpha,11}u_{\alpha+1,ip}-u_{\alpha,ip}u_{\alpha+1,11}\right)\\
&\;=u_{\alpha,i}\Delta_{\beta,\alpha+1;1,1,p,q}
+u_{\alpha+1,i}\Delta_{\alpha,\beta;1,1,p,q}\\
&\;+2u_{\beta,1}\Delta_{\alpha,\alpha+1;1,i,p,q}
+u_{\beta,p}\Delta_{\alpha,\alpha+1;1,1,i,q}
+u_{\beta,q}\Delta_{\alpha,\alpha+1;1,1,i,p},
\end{aligned}
\end{equation}
vanishing on $\boldsymbol\Delta=\mathbf{0}$;

\item  $\Xi=x_i \frac{\partial}{\partial u_\beta}$. Its second order prolongation reads
\begin{equation}
\Xi^{(2)}=x_i \frac{\partial}{\partial u_\beta}+\delta_{\beta,\gamma}\delta_{i ,k}
\frac{\partial}{\partial u_{\gamma,k}},
\end{equation}
whereupon
\begin{equation}
\Xi^{(2)}\left(\Delta_{\alpha,\alpha+1;1,1,p,q}\right)=0;
\end{equation}

\item $\Xi=u_\gamma \frac{\partial}{\partial u_\beta}$. Its second order prolongation is
\begin{equation}
\Xi^{(2)}=u_\gamma \frac{\partial}{\partial u_\beta}+\delta_{\beta,\mu}u_{\gamma,k}
\frac{\partial}{\partial u_{\mu,k}}
+\delta_{\beta,\mu}u_{\gamma,k\ell}
\frac{\partial}{\partial u_{\mu,k\ell}},
\end{equation}
whereupon
\begin{equation}
\begin{aligned}
&\Xi^{(2)}\left(\Delta_{\alpha,\alpha+1;1,1,p,q}\right)=\\
&\;=\delta_{\alpha ,\beta}\left(u_{\gamma,11}u_{\alpha+1,pq}-u_{\gamma,pq}u_{\alpha+1,11}
\right)\\
&\;+\delta_{\alpha+1, \beta}
\left(u_{\alpha,11}u_{\gamma,pq}-u_{\alpha,pq}u_{\gamma,11}\right)\\
&\;=\delta_{\alpha, \beta}\Delta_{\gamma,\alpha+1;1,1,p,q}+\delta_{\alpha+1,\beta}
\Delta_{\alpha,\gamma;1,1,p,q},
\end{aligned}
\end{equation}
vanishing on $\boldsymbol\Delta=\mathbf{0}$, so completing the proof.
\end{enumerate}
\end{proof}

We also observe that system \eqref{strongnm2} is invariant with respect to the projective transformations 
of $\mathbb{R}^{n+m}$ too.
\begin{proposition}
System~\eqref{strongnm2} admits the symmetries generated 
by the vector fields
\begin{equation}
a\left(\sum_{i=1}^n x_i\frac{\partial}{\partial x_i}+\sum_{\alpha=1}^m u_\alpha
\frac{\partial}{\partial u_\alpha}\right)
\end{equation}
with $a\in\{x_1,\ldots,x_n,u_1,\ldots,u_m\}$,
so that system~\eqref{strongnm2} is strongly Lie remarkable also with respect to the Lie 
algebra of projective transformations of $\mathbb{R}^{n+m}$.
\end{proposition}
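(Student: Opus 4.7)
The plan is in two stages: first, verify directly that each of the $n+m$ additional projective generators $\Xi_a = aD$, with $D = \sum_{i=1}^n x_i \frac{\partial}{\partial x_i} + \sum_{\alpha=1}^m u_\alpha \frac{\partial}{\partial u_\alpha}$ and $a$ ranging over the coordinates $\{x_1,\ldots,x_n,u_1,\ldots,u_m\}$, is a point symmetry of system~\eqref{strongnm2}; then, upgrade strong Lie remarkability from the affine to the projective algebra by a simple heredity argument.

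For the first stage, I would compute $\Xi_a^{(2)}$ using the prolongation formulas recalled in Section~\ref{sec2}. Since each component $\Delta_{\alpha,\alpha+1;1,1,p,q}$ is independent of $x_i$ and $u_\alpha$, only the $\partial/\partial u_{\gamma,k\ell}$-part of $\Xi_a^{(2)}$ acts nontrivially. Expanding by the Leibniz rule, the result is a sum of bilinear expressions in the second derivatives which, exactly in the pattern of the preceding proof, collects into a linear combination of the symbols $\Delta_{\alpha,\beta;i,j,p,q}$, each of which vanishes on $\boldsymbol\Delta = \mathbf{0}$. A useful sanity check is that the dilation $D$ itself, which already lies in the affine algebra, satisfies $D^{(2)}(\Delta_{\alpha,\alpha+1;1,1,p,q}) = -2\,\Delta_{\alpha,\alpha+1;1,1,p,q}$, reflecting the quadratic homogeneity of the defining polynomials in the second derivatives; this rules out the appearance of any spurious inhomogeneous term in the corresponding expansion for $\Xi_a^{(2)}$.

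For the second stage, the projective algebra of $\mathbb{R}^{n+m}$ contains the affine subalgebra, so any $d$-dimensional equation $\mathcal{E}'\subset J^2(\mathbb{R}^{n+m},n)$ admitting the projective algebra as a subalgebra of its point symmetries \emph{a fortiori} admits the affine one; the strong Lie remarkability with respect to the affine algebra, established in the previous theorem, then forces $\mathcal{E}'=\mathcal{E}$. Consequently, system~\eqref{strongnm2} is strongly Lie remarkable also with respect to the projective algebra.

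The main obstacle is the prolongation bookkeeping for the generators $u_\beta D$, whose dependent-variable prefactor produces substantially more terms at both the first and the second prolongation level than its $x_j D$ counterpart. Nevertheless, the expansion is structurally of the same type as the case $u_\beta \frac{\partial}{\partial x_i}$ already treated in the preceding proof, and involves no new conceptual difficulty.
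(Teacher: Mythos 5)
Your proposal follows the same route as the paper for the substantive part: compute the second prolongations of the extra generators $x_jD$ and $u_\beta D$ and check that, applied to each $\Delta_{\alpha,\alpha+1;1,1,p,q}$, they return a linear combination of the quantities $\Delta_{\alpha,\beta;i,j,p,q}$, hence vanish on $\boldsymbol\Delta=\mathbf{0}$. The paper actually carries this computation out in full (the prolongation of $u_\beta D$ indeed produces the terms $(u_\gamma-x_iu_{\gamma,i})u_{\beta,k\ell}-x_i(u_{\beta,k}u_{\gamma,i\ell}+u_{\beta,\ell}u_{\gamma,ik})-u_\beta u_{\gamma,k\ell}$ on the second-derivative coordinates, and the resulting expression does collapse as you predict), whereas your first stage only asserts that the expansion ``collects'' into such a combination; as written this is a plan rather than a verification, and it is precisely the step that carries the content of the proposition. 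Your homogeneity sanity check $D^{(2)}(\Delta)=-2\Delta$ is correct and is a reasonable consistency test, but it does not substitute for the term-by-term cancellation for $a=x_j$ and $a=u_\beta$. On the other hand, your second stage is a genuine addition: the paper proves only the invariance and leaves the upgrade to strong Lie remarkability with respect to the projective algebra implicit, while you supply the heredity argument (any maximal equation admitting $\mathcal{P}(\mathbb{R}^{n+m})$ admits the affine subalgebra, is therefore contained in \eqref{strongnm2} by the previous theorem, and equals it by maximality once \eqref{strongnm2} is known to admit $\mathcal{P}(\mathbb{R}^{n+m})$). That logical step is sound and worth making explicit; just note that it does require your first stage, so the prolongation bookkeeping cannot be waved away.
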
  
\begin{proof}
We need to prove that system \eqref{strongnm2} admits also the symmetries spanned by the $(n+m)$
vector fields
\begin{equation}
x_j\left(\sum_{i=1}^nx_i\frac{\partial}{\partial x_i}+\sum_{\gamma=1}^m u_\gamma
\frac{\partial}{\partial u_\gamma}\right),\qquad u_\beta\left(\sum_{i=1}^nx_i
\frac{\partial}{\partial x_i}+\sum_{\gamma=1}^mu_\gamma\frac{\partial}{\partial u_\gamma}
\right),
\end{equation} 
where $j=1,\ldots,n$, and $\beta=1,\ldots,m$.
\begin{enumerate}
\item $\Xi=x_j\left(x_i\frac{\partial}{\partial x_i}+u_\gamma\frac{\partial}{\partial u_
\gamma}\right)$ (sum over repeated indices). 

Its second order prolongation is
\begin{equation}
\begin{aligned}
\Xi^{(2)}&=x_jx_i \frac{\partial}{\partial x_i}
+x_ju_\gamma \frac{\partial}{\partial u_\gamma}+\delta_{j,k}\left(u_{\gamma}-x_m 
u_{\gamma,m}\right)\frac{\partial}{\partial u_{\gamma,k}}\\
&-\left(\delta_{j,k}u_{\gamma,m}u_{\gamma,\ell m}+\delta_{j,\ell}x_m u_{\gamma,k m}+x_j 
u_{\gamma,k\ell}\right)\frac{\partial}{\partial u_{\gamma,k\ell}}.
\end{aligned}
\end{equation}
It is
\begin{equation}
\begin{aligned}
&\Xi ^{(2)}\left(\Delta_{\alpha,\alpha+1;1,1,p,q}\right)=\\ 
\;&=\delta_{j,p}x_m \left(u_{\alpha,m q}u_{\alpha+1,11}-
u_{\alpha,11}u_{\alpha+1,mq}\right)\\
\;&+\delta_{j,q}x_m \left(u_{\alpha,m p}u_{\alpha+1,11}-
u_{\alpha,11}u_{\alpha+1,mp}\right)\\
\;&+2\delta_{j,1}x_m \left(u_{\alpha,pq}u_{\alpha+1,1m}-
u_{\alpha,1m}u_{\alpha+1,pq}\right)\\
\;&+2x_j \left(u_{\alpha,pq}u_{\alpha+1,11}-
u_{\alpha,11}u_{\alpha+1,pq}\right)\\
\;&=-x_m(\delta_{j,p}\Delta_{\alpha,\alpha+1;1,1,m,q}\delta_{j,q}+
\Delta_{\alpha,\alpha+1;1,1,m,p}
+2\delta_{j,1}\Delta_{\alpha,\alpha+1;1,m,p,q})\\
\;&-2x_j\Delta_{\alpha,\alpha+1;1,1,p,q},
\end{aligned}
\end{equation}
that vanishes on $\boldsymbol\Delta=\mathbf{0}$;

\item $\Xi=u_\beta\left(x_i\frac{\partial}{\partial x_i}+u_\gamma\frac{\partial}{\partial 
u_\gamma}\right)$ (sum over repeated indices). 

Its second order prolongation is
\begin{equation}
\begin{aligned}
\Xi^{(2)}&=u_\beta x_i \frac{\partial}{\partial x_i}
+u_\beta u_\gamma \frac{\partial}{\partial u_\gamma}+u_{\beta,k}\left(u_{\gamma}-x_i 
u_{\gamma,i}\right)\frac{\partial}{\partial u_{\gamma,k}}\\
&+\left(u_{\gamma}u_{\beta,k\ell}-u_{\beta}u_{\gamma,k\ell}\right.\\
&-\left.x_i(u_{\gamma,i}u_{\beta,k \ell}+u_{\beta,k}u_{\gamma,i \ell}+u_{\beta,l}
u_{\gamma,i k})\right)\frac{\partial}{\partial u_{\gamma,k\ell}}.
\end{aligned}
\end{equation}
It is
\begin{equation}
\begin{aligned}
&\Xi ^{(2)}\left(\Delta_{\alpha,\alpha+1;1,1,p,q}\right)=\\ 
&\;=(u_{\alpha}-x_i u_{\alpha,i}) \left(u_{\gamma,11}u_{\alpha+1,pq}-
u_{\gamma,pq}u_{\alpha+1,11}\right)\\
&\;+(u_{\alpha+1}-x_i u_{\alpha+1,i}) \left(u_{\alpha,11}u_{\gamma,pq}-
u_{\alpha,pq}u_{\gamma,11}\right)\\
&\;-2u_\gamma\left(u_{\alpha,11}u_{\alpha+1,pq}-
u_{\alpha,pq}u_{\alpha+1,11}\right)\\
&\;-2x_iu_{\gamma,1}\left(u_{\alpha,1i}u_{\alpha+1,pq}-
u_{\alpha,pq}u_{\alpha+1,1i}\right)\\
&\;-x_iu_{\gamma,p}\left(u_{\alpha,11}u_{\alpha+1,iq}-
u_{\alpha,iq}u_{\alpha+1,11}\right)\\
&\;-x_iu_{\gamma,q}\left(u_{\alpha,11}u_{\alpha+1,ip}-
u_{\alpha,ip}u_{\alpha+1,11}\right)\\
&\;=(u_{\alpha}-x_i u_{\alpha,i})\Delta_{\gamma,\alpha+1;1,1,p,q}+(u_{\alpha+1}-x_i 
u_{\alpha+1,i})\Delta_{\alpha,\gamma;1,1,p,q}\\
&\;-2u_{\gamma}\Delta_{\alpha,\alpha+1;1,1,p,q}-2x_iu_{\gamma,1}\Delta_{\alpha,\alpha
+1;1,i,p,q}\\
&\;-x_iu_{\gamma,p}\Delta_{\alpha,\alpha+1;1,1,i,q}-x_iu_{\gamma,q}\Delta_{\alpha,\alpha
+1;1,1,i,p},
\end{aligned}
\end{equation}
that vanishes on $\boldsymbol\Delta=\mathbf{0}$.
\end{enumerate}
\end{proof}

\section{Third order Lie remarkable equations}
\label{sec4}
Third order Lie remarkable equations characterized by the Lie algebra of affine 
transformations,
due to dimensionality considerations, exist  for $n=2$ and $m=1$, where the third 
order jet space and the Lie algebra have the same dimension.
In fact, in \cite{MOV-TMP2007}, the following third order Lie remarkable differential equation has 
been characterized:
\begin{equation}
\label{MOV-equation}
\begin{aligned}
&u_{,11}^3 u_{,222}^2+ u_{,22}^3u_{,111}^2 + 6 u_{,11}u_{,12} u_{,22}
u_{,111}  u_{,222}
- 6 u_{,12}u_{,22}^2u_{,111} u_{,112}  \\
&\;- 6 u_{,11} u_{,22}^2u_{,111} u_{,122}  - 6 u_{,11}^2 u_{,12} u_{,122}
u_{,222} - 6 u_{,11}^2 u_{,22}u_{,112} u_{,222}\\
&\;- 8 u_{,12}^3u_{,111} u_{,222} + 9 u_{,11} u_{,22}^2u_{,112}^2 
+ 9 u_{,11}^2 u_{,22}u_{,122}^2 + 12 u_{,12}^2u_{,22}u_{,111} u_{,122} \\
&\;+ 12 u_{,11} u_{,12}^2u_{,112} u_{,222} -18 u_{,11} u_{,12}u_{,22}u_{,112}  u_{,
122}  =0.
\end{aligned}
\end{equation}
Equation \eqref{MOV-equation} admits also the symmetries generated by
\begin{equation}
a\left(x_1\frac{\partial}{\partial x_1}+x_2\frac{\partial}{\partial x_2}+u\frac{\partial}
{\partial u}\right), \qquad a\in\{x_1,x_2,u\},
\end{equation}
\emph{i.e.}, it is invariant with respect to the Lie algebra of projective 
transformations of $\mathbb{R}^3$.

For $n=3$ and $m=1$, the dimension of Lie algebra of affine transformations is 20, 
whereas the dimension of third order jet space is 23, so that we can not characterize a third order Lie remarkable equation. Even if we take the Lie algebra of projective transformations (with dimension 24), neither strongly nor weakly 
Lie remarkable equations exist, since the rank of the distribution generated by 
the third order prolongation of vector fields of projective transformations is 20.

For $n=2$ the dimension of the Lie algebra of projective transformations, $d=(m+2)(m+4)$, 
is greater than the dimension of the jet space, $j=10m+2$  for all values of $m>0$, so 
that we may try to construct third order strongly Lie remarkable equations involving two 
independent variables and an arbitrary number of dependent ones. Thus, in principle,
we may conjecture that a hierarchy of Lie remarkable third order systems 
involving the unknowns $u_\alpha(x_1,x_2)$ ($\alpha=1,\ldots,m$, $m\ge 2$) may exist.

Taking $m=2$, the computation can be easily done, and the following result can be stated.

\begin{theorem}
The third order system of partial differential equations
\begin{equation}
\label{strong223}
\begin{aligned}
&(u_{1,22}u_{2,12} - u_{1,12}u_{2,22})((u_{1,22}u_{2,11} - u_{1,12}u_{2,12})^2\\
&\quad-(u_{1,11}u_{1,22}-u_{1,12}^2)(u_{2,11}u_{2,22}-u_{2,12}^2))u_{1,111} \\
&\;+3(u_{1,11}u_{2,12}-u_{1,12}u_{2,11})(u_{1,12}u_{2,12}-u_{1,22}u_{2,11})\times\\
&\quad\times(u_{1,12}u_{2,22}-u_{1,22}u_{2,12})u_{1,112} \\
&\;+3(u_{1,11}u_{2,12}-u_{1,12}u_{2,11})^2(u_{1,22}u_{2,12} - u_{1,12}u_{2,22})u_{1,122}\\
&\;+(u_{1,11}u_{2,12}-u_{1,12}u_{2,11})^2(u_{1,11}u_{2,22}-u_{1,12}u_{2,12})u_{1,222} \\
&\;+(u_{1,11}u_{1,22}-u_{1,12}^2)(u_{1,22}u_{2,11} - u_{1,11}u_{2,22})
(u_{1,12}u_{2,22}-u_{1,22}u_{2,12})u_{2,111} \\
&\;+3(u_{1,12}^2 - u_{1,11}u_{1,22})(u_{1,12}u_{2,11} - u_{1,11}u_{2,12})
(u_{1,12}u_{2,22}-u_{1,22}u_{2,12})u_{2,112} \\
&\;+(u_{1,12}^2 - u_{1,11}u_{1,22})(u_{1,12}u_{2,11} - u_{1,11}u_{2,12})^2u_{2,222}=0,\\
&3(u_{1,22}u_{2,12} - u_{1,12}u_{2,22})^2(u_{2,11}u_{2,22}-u_{2,12}^2)u_{1,112}\\
&\;+3(u_{1,11}u_{2,22}-u_{1,22}u_{2,11})(u_{1,22}u_{2,12} - u_{1,12}u_{2,22})(u_{2,11}u_{2,22}-u_{2,12}^2)u_{1,122} \\  
&\;+(u_{2,11}u_{2,22}-u_{2,12}^2)((u_{1,11}u_{2,22}-u_{1,22}u_{2,11})^2 \\
&\;+(u_{1,12}u_{2,22}-u_{1,22}u_{2,12})(u_{1,12}u_{2,11}-u_{1,11}u_{2,12}))u_{1,222}\\    
&\;+(u_{1,22}u_{2,12} - u_{1,12}u_{2,22})^3u_{2,111} \\
&\;+3(u_{1,12}u_{2,12}-u_{1,22}u_{2,11})(u_{1,22}u_{2,12} - u_{1,12}u_{2,22})^2u_{2,112} \\
&\;+3(u_{1,22}u_{2,12} - u_{1,12}u_{2,22})((u_{1,22}u_{2,11}-u_{1,12}u_{2,12})^2 \\
&\;- (u_{1,11}u_{1,22}-u_{1,12}^2)(u_{2,11}u_{2,22}-u_{2,12}^2))u_{2,122} \\
&\;+((u_{1,12}u_{2,12}-u_{1,22}u_{2,11})^3 -(u_{1,11}u_{2,22}+u_{1,12}u_{2,12}-2u_{1,22}u_{2,11})\times \\
&\quad\times(u_{1,11}u_{1,22}-u_{1,12}^2)(u_{2,11}u_{2,22}-u_{2,12}^2))u_{2,222}=0,
\end{aligned}
\end{equation}
characterizing a 20--dimensional submanifold in $J^3(\mathbb{R}^4,2)$, is strongly Lie 
remarkable with respect to the Lie algebra of projective transformations of $\mathbb{R}^4$.
\end{theorem}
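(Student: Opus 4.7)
The plan is to apply Theorem~\ref{th:strongsufficient}: it suffices to show that the distribution $S(\mathcal{E})$ generated by the third-order prolongations of the projective algebra $\mathcal{P}(\mathbb{R}^4)$ has, at every point $\theta\not\in\mathcal{E}$, rank strictly greater than $\ell=\dim\mathcal{E}=20$. A preliminary dimension count confirms the setting: $J^3(\mathbb{R}^4,2)$ has dimension $2+2\binom{5}{3}=22$, while $\mathcal{P}(\mathbb{R}^4)$ has dimension $(n+m)(n+m+2)=24$, being spanned by the $20$ affine generators \eqref{affine_gen} (with $n=m=2$) together with the four projective vector fields $a\,(x_1\partial_{x_1}+x_2\partial_{x_2}+u_1\partial_{u_1}+u_2\partial_{u_2})$, $a\in\{x_1,x_2,u_1,u_2\}$. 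The two left-hand sides of \eqref{strong223} are functionally independent on a dense open subset of $J^3(\mathbb{R}^4,2)$ --- as can be seen by inspection of their coefficients of $u_{1,111}$ and $u_{2,222}$ --- so they cut out a submanifold of codimension $2$, i.e.\ of dimension $20$.

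The bulk of the proof is a direct rank computation, entirely analogous to the strategy used in Section~\ref{sec3} but now at the third-order level. I would prolong each of the $24$ generators of $\mathcal{P}(\mathbb{R}^4)$ to order $3$ via the formulas of Section~\ref{sec2}, assemble the resulting $24\times 22$ coefficient matrix $M$ in the standard jet coordinates $(x_i,u_\alpha,u_{\alpha,i},u_{\alpha,ij},u_{\alpha,ijk})$, and compute its rank symbolically with the aid of ReLie. The claim to verify is twofold: (a)~at a generic point of $J^3(\mathbb{R}^4,2)$ the rank of $M$ equals the maximum value $22$; (b)~the locus where $\mathrm{rank}\,M\le 20$ coincides exactly with the variety defined by \eqref{strong223}, and at every $\theta\not\in\mathcal{E}$ the rank is strictly greater than $20$. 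Given (a) and (b), Theorem~\ref{th:strongsufficient} immediately yields strong Lie remarkability with respect to $\mathcal{P}(\mathbb{R}^4)$.

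The main obstacle lies in step~(b): the $21\times 21$ minors of $M$ are enormous polynomials in the second- and third-order jet coordinates, and one has to recognise that the ideal they generate agrees, up to invertible factors on a suitable open set, with the ideal generated by the two explicit polynomials of \eqref{strong223} (which are cubic in the second-order derivatives and linear in the third-order ones). A useful independent check, to be run in parallel with the rank computation, is to apply each of the $24$ prolonged generators to the two polynomials on the left-hand sides of \eqref{strong223} and verify that the result is in each case a linear combination --- with coefficients that are smooth functions on $J^3(\mathbb{R}^4,2)$ --- of those same two polynomials: this direct calculation establishes $\mathcal{P}(\mathbb{R}^4)\subseteq\hbox{sym}(\mathcal{E})$, confirms that the rank of $S(\mathcal{E})$ along $\mathcal{E}$ is at most $20$, and leaves only the strict inequality off $\mathcal{E}$ to be extracted from the minors of $M$.
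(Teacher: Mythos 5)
Your proposal follows essentially the same route as the paper's proof: one computes the rank of the distribution spanned by the third-order prolongations of the $24$ generators of $\mathcal{P}(\mathbb{R}^4)$, finds that it is maximal (equal to $22$, the dimension of $J^3(\mathbb{R}^4,2)$) off the equation and drops to $20$ precisely on the submanifold described by \eqref{strong223}, and then invokes Theorem~\ref{th:strongsufficient}. The paper simply reports the outcome of this (computer-assisted) rank computation, so your dimension counts and the supplementary tangency check are consistent elaborations of the same argument rather than a different method.
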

\begin{proof}
In fact, the rank of the distribution of the third order prolongations of the vector fields of Lie 
algebra of projective transformations of $\mathbb{R}^4$ is maximal (and equal to 
the dimension of third order jet space). Moreover, the rank lowers to 20 when evaluated on the 
submanifold described by \eqref{strong223}, and this completes the proof.
\end{proof}

For higher values of $m$,  the amount of computation required to look for possible third 
order strongly Lie remarkable equations is rapidly increasing. Work in this direction is 
in progress.

\section{Conclusions} 
\label{sec5}
In this paper, we derived some new second and third order  partial differential 
equations uniquely determined by the Lie algebra of affine or projective 
transformations of $\mathbb{R}^{n+m}$. The characterization of second order partial differential equations is done by using only the Lie algebra of affine transformations; however, we proved that these equations are also invariant with respect to the projective transformations. Affine transformations are sufficient to characterize a third order scalar strongly Lie remarkable equation in two independent variables, 
whereas for a system of two third order differential equations in two 
independent variables we need all the symmetries in the Lie algebra of projective 
transformations. 

As far as the second order strongly Lie remarkable differential equations here identified 
are concerned, by using the property according to which they are in one--to--one 
correspondence with the Lie algebra $\mathcal{P}(\mathbb{R}^{n+m})$ of projective transformations, as the classical Monge--Amp\`ere equation for 
a surface in $\mathbb{R}^3$ with vanishing Gaussian curvature, we can consider them as 
the instances of a hierarchy of second order Monge--Amp\`ere equations in 
$\mathbb{R}^{n+m}$. Recalling also that the simplest second order scalar ordinary differential equation is invariant with respect to the 8--dimensional Lie algebra of projective transformations of the plane, the hierarchy of second order differential equations reads: 
\begin{equation}
\begin{aligned}
&\frac{d^2u}{dx^2}=0, \qquad &&\mathcal{P}(\mathbb{R}^2),\\
&\frac{\partial^2 u}{\partial x_1^2}\frac{\partial^2 u}{\partial x_2^2}-
\left(\frac{\partial^2 u}
{\partial x_1\partial x_2}\right)^2=0, \qquad &&\mathcal{P}(\mathbb{R}^3),\\
&\det\left(\left\|\frac{\partial^2 u}{\partial x_i\partial x_j}\right\|\right)=0, \qquad 
&&\mathcal{P}(\mathbb{R}^{n+1}),\\
&\frac{\partial^2 u_\alpha}{\partial x_1^2}\frac{\partial^2 u_{\alpha+1}}{\partial x_p 
\partial x_q}-
\frac{\partial^2 u_{\alpha+1}}{\partial x_1^2}\frac{\partial^2 u_\alpha}{\partial x_p 
\partial x_q}=0
,\qquad && \mathcal{P}(\mathbb{R}^{n+m}),
\end{aligned}
\end{equation}
where $\alpha=1,\ldots,m-1$ and $i,j,p,q=1,\ldots, n$.

Lie remarkable systems of partial differential equations determined by other relevant Lie 
algebras of point symmetries are currently investigated and will be the object of a 
forthcoming paper.

\section*{Acknowledgments} This research has been partly supported by G.N.F.M. of 
``Istituto Nazionale di Alta Matematica''.

\medskip


\begin{thebibliography}{99}
\bibitem{Ovsiannikov} L.~V.~Ovsiannikov,  {Group Analysis of Differential
Equations}, Academic Press, New York,  1982.

\bibitem{Ibragimov} N.~H.~Ibragimov, {Transformation Groups Applied to
  Mathematical Physics}, D. Reidel Publishing Company, Dordrecht, 1985.

\bibitem{Olver1} P.~J.~Olver, {Applications of Lie Groups to
Differential Equations}, Second edition, Springer, 1991.

\bibitem{Olver2} P.~J.~Olver, {Equivalence, Invariants, and Symmetry},
Cambridge University Press, New York, 1995.

\bibitem{Many1999} V.~Bocharov, V.~N.~Chetverikov, S.~V.~Duzhin, 
N.~G.~Khor'kova, I.~S.~Krasil'shchik, A.~V.~Samokhin, Yu.~N.~Torkhov, A.~M.~Verbovetsky, A.~M.~Vinogradov, 
Symmetries and Conservation Laws for Differential Equations of Mathematical Physics. I.~S.~Krasil'shchik, 
A.~M.~Vinogradov eds., 
Translations of Math. Monographs, \textbf{182}, AMS, 1999.

\bibitem{Baumann} G.~Baumann, {Symmetry Analysis of Differential Equations with Mathematica}, Springer, Berlin, 2000.

\bibitem{BlumanAnco} G.~W.~Bluman, S.~C.~Anco, {Symmetry and Integration
Methods for Differential Equations}, Springer, New York, 2002,

\bibitem{BlumanCheviakovAnco} G.~W.~Bluman, A.~F.~Cheviakov, S.~C.~Anco, {Applications 
of Symmetry Methods to Partial Differential Equations}, Springer, New York,  2009.

\bibitem{BlumanCole} G.~W.~Bluman, J.~D.~Cole, {Similarity Methods for
Differential Equations}, Springer--Verlag, New York, 1974.

\bibitem{Rosenhausone} V.~Rosenhaus, On one--to--one correspondence between the 
equation and its group. The Monge--Amp\`ere equation, Preprint F. 18 Acad. Sci. Estonian
SSR--Tartu (1982) 1--16.

\bibitem{Rosenhaus1} V.~Rosenhaus, The unique determination of the equation by its invariance group and field--space symmetry, Algebras Groups Geom. 
\textbf{3} (1986) 148--166.

\bibitem{Rosenhaus2} V.~Rosenhaus, Groups of invariance and solutions of equations determined by them, Algebras Groups Geom. \textbf{5} (1988) 137--150.

\bibitem{Krause1994} J.~Krause, On the complete symmetry group of the classical Kepler system, {J. Math. Phys.} \textbf{35} (1994) 5734--5748.

\bibitem{Nucci1996} M.~C.~Nucci,  The complete Kepler group can be derived by Lie group
analysis, J. Math. Phys. \textbf{37} (1996) 1772--1775.

\bibitem{ALF2001} K.~Andriopoulos, P.~G.~L.~Leach, G.~P.~Flessas, Complete  symmetry groups of ordinary differential equations and their integrals: Some basic considerations, J. Math. Anal. Appl. \textbf{262} (2001) 256--273.

\bibitem{MyeniLeach2007} S.~M.~Myeni, P.~G.~L.~Leach, Heuristic analysis of the complete symmetry group and nonlocal symmetries for some nonlinear evolution equations, Math. Meth. Appl. Sci. \textbf{30} (2007)  2065--2077.

\bibitem{FushYeho} W.~I.~Fushchych, I.~Yehorchenko, Second--order differential invariants of 
the rotation group O(n) and of its extension: E(n), P(1,n), G(1,n), Acta App.\ Math. \textbf{28} 
(1992) 69--92.

\bibitem{Oliveri-NM2005} F.~Oliveri, Lie symmetries of differential equations: 
direct and inverse problems, Note di Matematica \textbf{23} (2005) 195--216.

\bibitem{Oliveri-RendPalermo2006} F.~Oliveri, Sur une propriet\'e remarquable 
des \' equations de Monge--Amp\`ere, Rend. Circ. Mat. Palermo, Supplemento \textbf{78} Ser. II (2006) 243--257.

\bibitem{MOV-Wascom05} G.~Manno, F.~Oliveri, R.~Vitolo, On an inverse problem in group analysis of PDEs: Lie--remarkable equations, Proc. ``Wascom 2005'' (Editors: R.~Monaco, 
G.~Mulone, S.~Rionero, T.~Ruggeri), World Scientific, Singapore (2006) 420--431.

\bibitem{MOV-JMAA2007} G.~Manno, F.~Oliveri, R.~Vitolo, On differential equations characterized by their Lie point symmetries, J. Math. Anal. Appl. \textbf{332} (2007) 767--786.

\bibitem{MOV-TMP2007} G.~Manno, F.~Oliveri, R.~Vitolo, Differential equations uniquely determined by algebras of point symmetries, Theor. Math. Phys. \textbf{151} (2007) 843--850.

\bibitem{MOV-SPT2007} G.~Manno, F.~Oliveri, R.~Vitolo, On the correspondence 
between differential equations and symmetry algebras. In Symmetry and Perturbation Theory, 
Proc. Int. Conf. SPT 2007; G.~Gaeta, R.~Vitolo, S.~Walcher, Eds; World Scientific, Singapore 
(2007) 164--171.

\bibitem{MOSV-JGP2014} G.~Manno, F.~Oliveri, G.~Saccomandi, R.~Vitolo, Ordinary differential equations described by their Lie symmetry algebra, J.  Geom. Phys. \textbf{85} (2014) 2--15.

\bibitem{reduce} A.~C.~Hearn, Reduce User's Manual, Santa Monica, CA, USA, 2004.

\bibitem{Oliveri-Relie} F.~Oliveri, ReLie: A Reduce program for Lie group analysis of
differential equations, Submitted (2019).

\bibitem{Ampere} A-M.~Amp\`ere, Consid\'erations g\'en\'erales sur les int\'egrales des 
\'equations aux diff\'erences partielles, Journal de l'\'Ecole Polytechnique \textbf{10} 
(1815) 549--611.

\bibitem{Boillat1968} G.~Boillat, Le champ scalaire de Monge--Amp\`ere, Det Kgl. Norske Vid. Selsk. Forth. \textbf{41} (1968) 78--81.

\bibitem{Ruggeri} T.~Ruggeri, Su una naturale estensione a tre variabili dell'equazione di Monge--Amp\`ere, Rend. Accad. Naz. Lincei \textbf{55} (1973) 445--449.

\bibitem{DonatoRamgulamRogers} A.~Donato, U.~Ramgulam, C.~Rogers, 
The $(3+1)$--dimensional Monge--Amp\`ere equation in discontinuity were theory: Application of a reciprocal transformation, Meccanica \textbf{27} (1992) 257--262.

\bibitem{Boillat_n} G.~Boillat, Sur l'\'{e}quation g\'{e}n\'{e}rale de Monge--Amp\`{e}re \`{a} plusieurs variables, C. R. Acad.\ Sci.\ Paris \textbf{313} (1991) 805--808.

\bibitem{Oliveri_MAlinearizable} F.~Oliveri, Linearizable second order Monge--Amp\`ere equations, J. Math. Anal. Appl. \textbf{218} (1998) 329--345.

\bibitem{GorgoneOliveri1} M.~Gorgone, F.~Oliveri, Nonlinear first order partial differential equations reducible to first order homogeneous and autonomous quasilinear ones, Ricerche di Matematica \textbf{66} (2017) 51--63.

\bibitem{GorgoneOliveri2} M.~Gorgone, F.~Oliveri, Nonlinear first order PDEs reducible to autonomous form polynomially homogeneous in the derivatives, J. Geom. Phys. \textbf{113} (2017) 53--64.


\bibitem{Boillat_higher} G.~Boillat, Sur l'\'equation g\'en\'erale de Monge--Amp\`ere 
d'ordre sup\'erieur, C. R. Acad. Sci. Paris S\'er. I. Math. \textbf{315} (1992) 1211--1214.

\bibitem{Ganchev-Milousheva} G.~Ganchev, V.~Milousheva, On the theory of surfaces in the four-dimensional Euclidean space, Kodai Math. J. \textbf{31} (2008) 183--198.

\end{thebibliography}
\end{document}